\newcommand{\mybox}{%
    \collectbox{%
        \setlength{\fboxsep}{1.3pt}%
       \setlength{\arrayrulewidth}{1.2pt}
        \setlength{\fboxrule}{1.2pt} 
        \fbox{\BOXCONTENT}%
    }%
}
\newcommand*{\din}{d_{\text{in}}}
\newcommand*{\dout}{d_{\text{out}}}
\newtheorem{theorem}{Theorem}[section]
\newtheorem*{theorem*}{Theorem}
\newtheorem{lemma}[theorem]{Lemma}
\newcommand*{\cA}{\mathcal{A}}
\newcommand*{\dC}{\mathds{C}}
\newcommand*{\cD}{\mathcal{D}}
\newcommand*{\cI}{\mathcal{I}}
\newcommand*{\dI}{\mathds{I}}
\newcommand*{\cJ}{\mathcal{J}}
\newcommand*{\cN}{\mathcal{N}}
\newcommand*{\cM}{\mathcal{M}}
\newcommand*{\cO}{\mathcal{O}}
\newcommand*{\cK}{\mathcal{K}}
\newcommand*{\cX}{\mathcal{X}}
\DeclareMathOperator{\Haar}{Haar}
\DeclareMathOperator{\unif}{Uniform}
\newcommand*{\eps}{\varepsilon}
\newcommand*{\id}{\mathrm{id}}
\newcommand*{\tr}{\mathrm{Tr}}
\newcommand*{\spr}[2]{\langle #1 | #2 \rangle}
\newcommand*{\pr}[1]{\mathds{P}\left(#1 \right)}
\newcommand*{\ex}[1]{\mathds{E}\left(#1 \right)}
\newcommand*{\comment}[1] {}
\title{Sample-Optimal Quantum Process Tomography with Non-Adaptive Incoherent Measurements}
\author{Aadil Oufkir}
\affil[]{\small{Univ Lyon, Inria, ENS Lyon, UCBL, LIP, Lyon, France}}
\date{}
\begin{document}
\maketitle
\begin{abstract} 
How many copies of a quantum process are necessary and sufficient to construct  an approximate classical description of it? We extend the result of Surawy-Stepney, Kahn, Kueng, and Guta (2022)
to show that $\tilde{\mathcal{O}}(\din^3\dout^3/\varepsilon^2)$ copies are sufficient to learn any   quantum channel $\mathds{C}^{\din\times \din}\rightarrow\mathds{C}^{\dout\times \dout}$ to within $\varepsilon$ in diamond norm. Moreover, we show that $\Omega(\din^3\dout^3/\varepsilon^2)$ copies are necessary for any strategy using  incoherent non-adaptive measurements. This lower bound applies even for ancilla-assisted strategies.
  \end{abstract}


We consider the problem of quantum process tomography which consists of approximating an arbitrary quantum channel--any linear map that preserves the axioms of quantum mechanics. This task is an important tool in quantum information processing and quantum control which has been performed in actual experiments (see e.g.  \cite{o2004quantum,bialczak2010quantum,yamamoto2010quantum}). Given a quantum channel $\cN: \dC^{\din\times \din} \rightarrow \dC^{\dout\times \dout}$ as a black box, a learner could choose the input state and send it through the unknown quantum channel. Then, it can only extract classical information by performing 
a measurement on the output state. It repeats this procedure at different steps. After collecting a sufficient amount of classical data, the goal is to return a quantum channel $\Tilde{\cN}$ satisfying:
\begin{align}\label{diamond}
    \forall \rho \in \dC^{\din\times \din} \otimes \dC^{\din\times \din}: \|\id\otimes(\cN-\tilde{\cN})(\rho) \|_1\le \eps \|\rho\|_1
\end{align}
 with high probability. In this work, we investigate the optimal complexity of non-adaptive strategies using incoherent measurements. These strategies can only use one copy of the unknown channel at each step and must specify the input states and measurement devices before starting the learning procedure.

\textbf{Contribution} The main contribution of this paper is to show that the optimal complexity of the quantum process tomography with non-adaptive incoherent measurements is $\Tilde{\Theta}(\din^3\dout^3/\eps^2)$. 
First, we prove a general lower bound of $\Omega(\din^3\dout^3/\eps^2)$ on the number of incoherent measurements for every non-adaptive process learning algorithm. To do so, we construct an $\Omega(\eps)$-separated family of quantum channels close to the completely depolarizing channel of cardinal $M=\exp(\Omega(\din^2\dout^2))$ by choosing random Choi states of a specific form.  This family is used to encode a message from $\{1,\dots, M\}$. A process tomography algorithm can be used to decode this message with the same error probability. Hence, the encoder and decoder should share at least $\Omega(\din^2\dout^2)$ nats of information. On the other hand, we show that the correlation between the encoder and decoder can only increase by at most $\cO(\eps^2/\din\dout)$ nats after each measurement. Note that the naive upper bound on this correlation is $\cO(\eps^2)$, we obtain an improvement by a factor $\din\dout$  by exploiting the randomness in the construction of the quantum channel. This result is stated in Theorem~\ref{thm:lb}. Next, we show that the process tomography algorithm  of \cite{surawy2022projected} can be generalized to approximate an unknown quantum channel to within $\eps$ in the diamond norm (\ref{diamond}) using a number of incoherent measurements $\Tilde{\cO}(\din^3\dout^3/\eps^2)$ (Theorem~\ref{upper-bound}). For this, we relate the diamond norm between two quantum channels and the operator norm between their corresponding Choi states which improves on the usual inequality with the $1$-norm: $\|\cM\|_\diamond\le \din\|\cJ_{\cM}\|_1$  (see e.g. \cite{jenvcova2016conditions}).

\textbf{Related work} The first works on process tomography including \cite{chuang1997prescription,poyatos1997complete} follow the strategy of learning the quantum states images of a complete set of basis states 
then obtaining the quantum channel by an inversion. The problem of state tomography using incoherent measurements is fully understood even for adaptive strategies \cite{haah2016sample,guctua2020fast,lowe2022lower,chen2022tight}: the optimal complexity is $\Theta(d^3/\eps^2)$. So, learning a quantum channel can be done using $\cO(\din^2\dout^3)$ measurements, but this complexity doesn't take into account the accumulation of errors. The same drawback can be seen in the resource analysis of different strategies by \cite{mohseni2008quantum}.
Another reductive approach is to use the Choi–Jamiołkowski isomorphism \cite{choi1975completely,jamiolkowski1972linear} to reduce the process tomography to state tomography with  a higher dimension \cite{leung2000towards,d2001quantum}. However, this requires an ancilla and only implies a sub-optimal upper bound $\cO((\din\dout)^3/(\eps/\din)^2)=\cO(\din^5\dout^3/\eps^2) $ for learning in the diamond norm. \cite{surawy2022projected} propose an algorithm for estimating the Choi state in the $2$ norm that requires only $\tilde{\cO}(d^4/\eps^2)$ ancilla-free incoherent measurements (when $\din=\dout=d$). This article generalizes this result to the diamond norm and general input/output dimensions and shows that this algorithm is optimal up to a logarithmic factor.
\\A special case of quantum process tomography is learning Pauli channels. These channels have weighted Pauli matrices as Kraus operators and can be learned in diamond norm using $\tilde{\cO}(d^3/\eps^2)$ measurements \cite{flammia2020efficient} (here $\din=\dout=d$). Furthermore, it is shown that $\Omega(d^3/\eps^2)$  are necessary for any non-adaptive strategy~\cite{fawzi2023lower}. While the techniques of the lower bound of this article are similar to the one in \cite{fawzi2023lower}, we obtain here a larger lower bound because, in general, we are not restricted to weighted Pauli matrices in the Kraus operators and these latter are implicitly chosen at random. 

\section{Preliminaries} 
We consider quantum channels 
of input dimension $\din$ and output dimension $\dout$. We use the notation $[d] := \{1,\dots,d\}$. 
We adopt the bra-ket notation: a column vector is denoted $\ket{\phi}$ and its adjoint is denoted $\bra{\phi}=\ket{\phi}^\dagger$. With this notation, $\spr{\phi}{\psi}$ is the dot product  of the vectors $\phi$ and $\psi$ and, for a unit vector $\ket{\phi}\in \mathrm{S}^d$,  $\proj{\phi}$ is the rank-$1$ projector on the space spanned by the vector $\phi$. The canonical basis $\{e_i\}_{i\in [d]}$ is denoted $\{\ket{i}\}_{i\in [d]}:=\{\ket{e_i}\}_{i\in [d]}$. A quantum state is a positive semi-definite Hermitian matrix of trace $1$. A $(\din, \dout)$-dimensional quantum channel is a map $\cN: \mathds{C}^{\din\times \din}\rightarrow \mathds{C}^{\dout\times \dout}$ of the form $\cN(\rho)=\sum_{k}A_k \rho A_k^\dagger$ where the Kraus operators $\{A_k\}_{k\in \cK}\in\left( \mathds{C}^{\dout \times\din }\right)^\cK$ satisfy $\sum_{k\in \cK} A_k^\dagger A_k=\mathds{I}_{\din}$. For instance, the identity map $\id(\rho)=\rho$ admits the Kraus operator $\{\mathds{I}\}$ and the completely depolarizing channel $\cD(\rho)=\tr(\rho)\frac{\mathds{I}}{\dout}$ admits the Kraus operators $\left\{\frac{1}{\sqrt{\dout}}\ket{i}\bra{j}\right\}_{j\in[\din], i\in [\dout]}$.
A map $\cN$ is a quantum channel if, and only if, it is:
\begin{itemize}
    \item \textbf{completely positive:} for all $\rho\succcurlyeq 0$, $\id\otimes \cN(\rho)\succcurlyeq 0$ and
    \item \textbf{trace preserving:} for all $\rho$, $\tr(\cN(\rho))=\tr(\rho)$.
\end{itemize}

We define the diamond distance between two quantum channels $\cN$ and $\cM$ as the diamond norm of their difference:
\begin{align*}
    d_\diamond(\cN,\cM):= \max_{\rho }\|\id\otimes(\cN-\cM)(\rho) \|_1
\end{align*}
where the maximization is over quantum states and the Schatten $p$-norm of a matrix $M$ is defined as $\|M\|_p^p=\tr\left(\sqrt{M^\dagger M}^p\right)$. The diamond distance can be thought of as a worst-case distance, while the average case distance is given by the Hilbert-Schmidt or Schatten $2$-norm between the corresponding Choi states. We define the Choi state of the channel $\cN$ as $\cJ_{\cN}= \id\otimes\cN(\proj{\Psi})\in \mathds{C}^{\din \times \din }\otimes \mathds{C}^{\dout \times \dout }$ where $\ket{\Psi}=\frac{1}{\sqrt{\din}}\sum_{i=1}^{\din}\ket{i}\otimes\ket{i}$ is the maximally entangled state. However, to have comparable distances, we will normalize the $2$-norm which is equivalent to unnormalizing the maximally entangled state and we define  the $2$-distance as follows:
\begin{align*}
   d_2(\cN,\cM):= \din\|\cJ_{\cN}-\cJ_{\cM}\|_2=\|\id\otimes(\cN-\cM)(\din\proj{\Psi}) \|_2. 
\end{align*}
This is a valid distance since the map $\cJ : \cN \mapsto \id\otimes\cN(\proj{\Psi}) $  is an isomorphism called the Choi–Jamiołkowski isomorphism \cite{choi1975completely,jamiolkowski1972linear}. Note that $\cJ$ should be positive semi-definite and satisfy $\tr_2(\cJ)=\frac{\dI}{\din}$ to be a valid Choi state (corresponding to a quantum channel). 

We consider the channel tomography problem which consists of learning a quantum channel $\cN$ in the diamond distance. Given a precision  parameter $\eps>0$, the goal is to construct a quantum channel $\tilde{\cN}$ satisfying with at least a probability $2/3$:
\begin{align*}
    d_\diamond(\cN,\tilde{\cN})\le \eps.
\end{align*}
 An algorithm $\cA$ is $1/3$-correct for this problem if it outputs a quantum channel $\eps$-close to $\cN$ with a probability of error at most $1/3$. 
We choose to learn in the diamond distance because it characterizes the minimal error probability to distinguish between two quantum channels when auxiliary systems are allowed \cite{watrous2018theory}. 

The learner can only extract classical information from the unknown quantum channel $\cN$ by performing a measurement on the output state. Throughout the paper, we only consider unentangled or incoherent measurements. That is, the learner can only measure with a $d$ (or $d\times d$)-dimensional measurement device. Precisely, a $d$-dimensional measurement is defined by a POVM (positive operator-valued measure) with a finite number of elements: this is a set of positive semi-definite matrices $\mathcal{M}=\{M_x\}_{x\in \cX}$ acting on the Hilbert space $\mathds{C}^{d}$ and satisfying $\sum_{x\in \cX} M_x=\mathds{I}$. Each element $M_x$ in the POVM $\mathcal{M}$ is associated with the outcome $x\in \cX$. The tuple $\{\tr(\rho M_x)\}_{x\in \cX}$ is non-negative and sums to $1$: it thus defines a probability. Born's rule \cite{1926ZPhy...37..863B} says that the probability that the measurement on a quantum state $\rho$ using the POVM $\cM$ will output $x$ is exactly $\tr(\rho M_x)$. Depending on whether an auxiliary system is allowed to be used, we distinguish two types of strategies.

\begin{figure}[h!]
    \centering
    \mybox{\begin{quantikz}
\lstick{$\rho_{1}$} & \gate{\cN} \qw  &\meter{$\cM_1$}   &\rstick{ $x_1$} \cw
\end{quantikz}}
\mybox{\begin{quantikz}
\lstick{$\rho_{2}$} & \gate{\cN} \qw&\meter{$\cM_2$}   & \rstick{$x_2$} \cw 
\end{quantikz}}$\underset{~\dots}{}$\mybox{\begin{quantikz}
\lstick{$\rho_{N}$} & \gate{\cN} \qw &\meter{$\cM_N$}    & \rstick{ $x_N$} \cw\end{quantikz}}
   \caption{Illustration of an ancilla-free independent  strategy for quantum process tomography.}
\label{Fig: Non-Adap-ancill-free}
\end{figure}
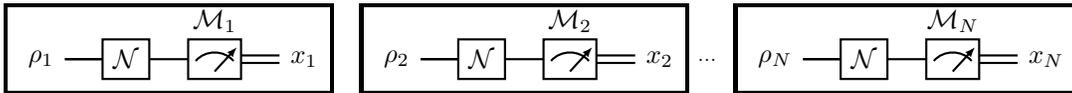
\paragraph{Ancilla-free strategies} At each step $t$, the learner would choose an input $\din$-dimensional state $\rho_t\in \mathds{C}^{\din\times \din}$ and a $\dout$-dimensional measurement device $\cM_t=\{ M_x^t\}_{x\in \cX_t} \in (\mathds{C}^{\dout\times \dout} )^{\cX_t}$. It thus sees the outcome $x_t \in \cX_t$ with a probability $\tr(\cN(\rho_t) M_{x_t}^t)$ (see Fig.~\ref{Fig: Non-Adap-ancill-free}). 

\paragraph{Ancilla-assisted strategies} At each step $t$, the learner would choose an input $d\times \din$-dimensional state $\rho_t\in \mathds{C}^{d\times d} \otimes\mathds{C}^{\din\times \din} $ and a $d\times \dout$-dimensional measurement device $\cM_t=\{ M_x^t\}_{x\in \cX_t} \in (\mathds{C}^{d\times d} \otimes\mathds{C}^{\dout\times \dout} )^{\cX_t}$. It thus sees the outcome $x_t \in \cX_t$ with a probability $\tr(\id\otimes\cN(\rho_t) M_{x_t}^t)$ (see Fig.~\ref{Fig: Non-Adap-ancill-ass}).
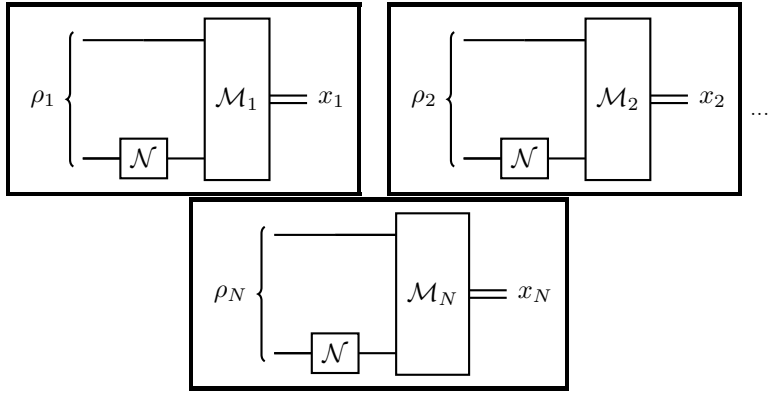
\begin{figure}[h!]
    \centering
    \mybox{\begin{quantikz}
\lstick[wires=3, nwires={2}]{$\rho_{1}$} & \qw &\gate[wires=3 , nwires={2}]{\cM_1 }  \qw  &      &
\\  &  &  &  \rstick{$x_{1}$} \cw
 \\& \gate{\cN} \qw &   & 
\end{quantikz}}
\mybox{\begin{quantikz}
\lstick[wires=3, nwires={2}]{$\rho_{2}$} & \qw &\gate[wires=3 , nwires={2}]{\cM_2 } \qw  &      &
\\  &  &  &  \rstick{$x_{2}$} \cw
 \\& \gate{\cN} \qw &   & 
\end{quantikz}}$\underset{~\dots}{}$\mybox{\begin{quantikz}
\lstick[wires=3, nwires={2}]{$\rho_{N}$} &\qw  &\gate[wires=3 , nwires={2}]{\cM_N}  \qw  &      &
\\  &  &  &  \rstick{$x_{N}$} \cw
 \\& \gate{\cN} \qw &   & 
\end{quantikz}}
   \caption{Illustration of an ancilla-assisted independent strategy for quantum process tomography.}
\label{Fig: Non-Adap-ancill-ass}
\end{figure}
Note that ancilla-assisted strategies were proven to provide  an  exponential (in the number of qubits $n=\log_2(d)$) advantage over ancilla-free strategies for some problems \cite{chen2022quantum,chen2022exponential}. However, in this work, we show that ancilla-assisted strategies cannot overcome ancilla-free strategies for process tomography. Finally, we only consider non-adaptive strategies: the input states and measurement devices should be chosen before starting the learning procedure and thus cannot depend on the observations.
 
  Given two  random variables  $X$ and $Y$ taking values in the sets  $[d]$ and $[d']$ respectively,
 the mutual information between $X$ and $Y$ is the Kullback Leibler divergence between the joint distribution $P_{(X,Y)}$ and the product distribution $P_X\times P_Y$:
\begin{align*}
\cI(X:Y)=\sum_{i=1}^d\sum_{j=1}^{d'} \pr{X=i, Y=j}  \log\left(\frac{\pr{X=i, Y=j}   }{\pr{X=i}\pr{Y=j}  }\right).
\end{align*}
All the logs of this paper are taken in base $e$ and the information is measured in ‘‘nats''.

\section{Lower bound}
In this section, we would like to investigate the intrinsic limitations of learning quantum channels using incoherent measurements. To avoid repetition, we consider only ancilla-assisted strategies since they contain ancilla-free strategies as a special case: one can map every $\din$-dimensional input state $\rho$ to the $d\times \din$-dimensional input state $\tilde{\rho}=\frac{\dI}{d}\otimes \rho$ and every $\dout$-dimensional POVM $\cM= \{M_x\}_{x\in \cX} $ to the $d\times \dout$-dimensional POVM $\tilde{\cM}= \{\dI_{d} \otimes M_x\}_{x\in \cX}$.  Mainly, we prove the following theorem:
\begin{theorem}\label{thm:lb} Let $\eps\le 1/16$ and $\dout\ge 4$.
	Any non-adaptive ancilla-assisted algorithm for process tomography in diamond distance  requires 
	\begin{align*}
	N=\Omega\left(  \frac{\din^3\dout^3}{\eps^2}  \right)
	\end{align*}
 incoherent measurements. 
\end{theorem}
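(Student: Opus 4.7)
The plan follows the hypothesis-testing template of \cite{fawzi2023lower}, now with a random family that ranges over general (not Pauli-diagonal) channels.

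\textbf{Step 1: hard ensemble.} I would define random channels $\cN_Z$ via their Choi states
\[
\cJ_{\cN_Z} = \frac{\dI_{\din\dout}}{\din\dout} + \alpha\,H_Z, \qquad H_Z = \frac{1}{\sqrt{\din\dout}}\sum_{i=1}^{K}\epsilon_i\, B_i ,
\]
with $\alpha=\Theta(\eps/(\din\dout))$, $\epsilon_i\in\{\pm 1\}$ i.i.d.\ Rademacher, and $\{B_i\}_{i=1}^{K}$ a Hilbert--Schmidt orthonormal basis of the trace-annihilating Hermitian subspace $\cT=\{H\in\mathrm{Herm}(\dC^{\din\dout}):\tr_2(H)=0\}$, of real dimension $K=\din^2(\dout^2-1)$. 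The scaling by $\sqrt{\din\dout}$ is what simultaneously ensures that $\|H_Z\|_{\mathrm{op}}=O(1)$ with high probability (matrix Bernstein), so that $\cJ_{\cN_Z}\succcurlyeq 0$ for $\eps\le 1/16$, and that $\ex{\tr(A\,H_Z)^2}\le \|A\|_2^2/(\din\dout)$ for every $A$. Diamond separation follows from $d_\diamond(\cN_Z,\cN_{Z'})\ge \|\cJ_{\cN_Z}-\cJ_{\cN_{Z'}}\|_1 \ge \|\cJ_{\cN_Z}-\cJ_{\cN_{Z'}}\|_2^2/\|\cJ_{\cN_Z}-\cJ_{\cN_{Z'}}\|_{\mathrm{op}}$ together with concentration of the two random norms: a typical pair is $\Omega(\eps)$-separated, and a Chernoff/expurgation step extracts a subfamily $\cF$ of size $|\cF|=\exp(\Omega(\din^2\dout^2))$ that is pairwise $\Omega(\eps)$-separated in the diamond norm.

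\textbf{Step 2: Fano.} Let $Z$ be uniform on $\cF$. Any $1/3$-correct algorithm identifies $Z$ by nearest-neighbour decoding with probability $\ge 2/3$, so Fano's inequality yields
\[
\cI(Z:x_1,\ldots,x_N)\ge \tfrac{2}{3}\log|\cF|-\log 2 = \Omega(\din^2\dout^2) .
\]

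\textbf{Step 3: per-measurement information (the heart of the argument).} For a non-adaptive strategy, outcomes are conditionally independent given $Z$, hence $\cI(Z:x_{1:N})\le \sum_t \cI(Z:x_t)$. Write $p_t^Z(x)=\tr(A_t(x)\,\cJ_{\cN_Z})$, where $A_t(x)\succcurlyeq 0$ is the effective dual operator obtained from $\rho_t, M_x^t$ via the Choi duality. Trace preservation forces $\sum_x A_t(x)\in \dI_\din\otimes\mathrm{Herm}(\dC^{\dout})$ with total trace $\din\dout$, so $\sum_x \|A_t(x)\|_{\mathrm{op}}\le \sum_x \tr(A_t(x)) = \din\dout$. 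Since $\ex{H_Z}=0$, the $Z$-marginal of $p_t^Z$ is $p_t^{\cD}$, hence $\cI(Z:x_t)\le \ex{\KL(p_t^Z\,\|\,p_t^\cD)}\le \ex{\chi^2(p_t^Z\,\|\,p_t^\cD)}$. Using $p_t^Z(x)-p_t^\cD(x)=\alpha\,\tr(A_t(x)H_Z)$ and $p_t^\cD(x)=\tr(A_t(x))/(\din\dout)$, then the key estimate $\ex{\tr(A H_Z)^2}\le \|A\|_2^2/(\din\dout)$ and $\|A\|_2^2\le \|A\|_{\mathrm{op}}\tr(A)$,
\[
\cI(Z:x_t)\le \alpha^2\,\din\dout\sum_x\frac{\ex{\tr(A_t(x)H_Z)^2}}{\tr(A_t(x))} \le \alpha^2\sum_x \|A_t(x)\|_{\mathrm{op}} \le \alpha^2\,\din\dout = O\!\left(\frac{\eps^2}{\din\dout}\right) .
\]
Combining with Step 2 yields $N\cdot O(\eps^2/(\din\dout))\ge \Omega(\din^2\dout^2)$, i.e., $N=\Omega(\din^3\dout^3/\eps^2)$.

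\textbf{Main obstacle.} The delicate point is matching Step 1 to the $(\din\dout)^{-1}$ estimate in Step 3: one must pick $H_Z$ so that (i) the positivity constraint $\|H_Z\|_{\mathrm{op}}=O(1/\eps)$ holds with $\eps$ only required to be an absolute constant (not scaling with the dimension), (ii) the second moment scales as $\|A\|_2^2/(\din\dout)$ rather than the trivial $\|A\|_2^2$, and (iii) the induced ensemble is $\Omega(\eps)$-separated in the diamond norm on $\exp(\Omega(\din^2\dout^2))$ members. Items (i) and (ii) pull the normalisation of $H_Z$ in opposite directions, and striking the right balance — which is precisely the $\din\dout$-fold sharpening over the naive per-measurement bound $O(\eps^2)$ — is what powers the optimal $\din^3\dout^3$ scaling and distinguishes this argument from the Pauli-diagonal case handled in \cite{fawzi2023lower}.
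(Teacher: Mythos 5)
Your overall architecture --- a packing of $\exp(\Omega(\din^2\dout^2))$ channels at diamond distance $\Omega(\eps)$ around the depolarizing channel, Fano's inequality, and a per-measurement information bound of $O(\eps^2/\din\dout)$ that extracts the crucial extra factor $\din\dout$ from the randomness of the perturbation --- is exactly the paper's. Your Step 3 is in fact cleaner than the paper's: you get subadditivity of mutual information from conditional independence and bound each term by a $\chi^2$ divergence against the depolarizing reference, where the paper re-derives essentially the same bound by hand via Jensen's inequality (Lemma~\ref{lem: cond-mut-inf}); your identity $\ex{\tr(A H_Z)^2}\le \|A\|_2^2/(\din\dout)$ plays the role of the paper's Weingarten computation, and building $H_Z$ directly inside $\{\tr_2 H=0\}$ replaces the paper's explicit subtraction of $\tr_2(U+U^\dagger)\otimes \dI/\dout$. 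The genuinely different ingredient is the ensemble: Rademacher signs over an orthonormal basis of the trace-annihilating subspace, versus the paper's $\cJ_U=\frac{\dI}{\din\dout}+\frac{\eps}{\din\dout}(U+U^\dagger)-\frac{\eps}{\din\dout}\tr_2(U+U^\dagger)\otimes\frac{\dI}{\dout}$ with $U\sim\Haar(\din\dout)$.

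That choice is where the first real gap sits. Positivity of $\cJ_{\cN_Z}$ at constant $\eps$ needs $\|H_Z\|_{\mathrm{op}}=O(1)$, but matrix Bernstein (noncommutative Khintchine) only gives $\|H_Z\|_{\mathrm{op}}=O(\sqrt{\log(\din\dout)})$ for a Rademacher sum over a general orthonormal basis; since $\|H_Z\|_2^2\approx \din\dout$ already forces $\|H_Z\|_{\mathrm{op}}\ge 1$, you are asserting the extremal delocalized behaviour, which holds only for specific bases and only via sharper universality-type matrix concentration, not via Bernstein. As written, your argument establishes the lower bound only for $\eps\lesssim 1/\sqrt{\log(\din\dout)}$ rather than $\eps\le 1/16$. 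The paper avoids this entirely because $\|U+U^\dagger\|_\infty\le 2$ and $\|\tr_2(U+U^\dagger)\otimes\dI/\dout\|_\infty\le 2$ hold deterministically. The second gap: your key second-moment estimate is a statement about the i.i.d.\ Rademacher measure, but in Step 3 the expectation defining $\cI(Z:x_t)$ is over $Z$ uniform on the \emph{expurgated} subfamily $\cF$; for an individual $Z$ one only has $\tr(A H_Z)^2\le \|A\|_2^2\|H_Z\|_2^2\approx \|A\|_2^2\,\din\dout$, worse than the average by a factor $(\din\dout)^2$, so you must add to the expurgation the condition that the empirical second moments over $\cF$ track the population ones. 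The paper does precisely this in Lemma~\ref{lem:mean_x-meanU} via Hoeffding and a union bound, which is one reason the family size $M=\exp(\Omega(\din^2\dout^2))$ is needed beyond Fano. Both gaps are repairable --- the first most easily by adopting the Haar ensemble --- but neither is cosmetic.
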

\begin{proof}
 For the proof, we use the construction of the Choi state:
	\begin{align*}
	\cJ_U=\frac{\dI}{\din \dout}+\frac{\eps}{\din \dout} (U+U^\dagger) - \frac{\eps}{\din \dout} \tr_2(U+U^\dagger)\otimes \frac{\dI}{\dout}
	\end{align*}
	where $U \sim \Haar(\din \dout)$. $\cJ_U $ is Hermitian and satisfies $\tr_2(\cJ)=\frac{\dI}{\din}$. Moreover, $\cJ_U\succcurlyeq 0$ for $\eps\le 1/4$.  
 Indeed, $U$ is a unitary so it has an operator norm $1$ thus $\|U+U^\dagger\|_\infty \le 2$. Besides, $\|\tr_2(U+U^\dagger)\otimes \frac{\dI}{\dout}\|_\infty= \frac{1}{\dout}\|\tr_2(U+U^\dagger)\|_\infty \le \max_i \|\dI \otimes \bra{i}(U+U^\dagger)\dI \otimes \ket{i}\|_\infty\le 2$. We claim that:
\begin{lemma}\label{lem:construction}
		We can construct an $\eps/2$-separated (according to the diamond distance)
		 family $\{\cN_x\}_{x\in [M]}$  of cardinal $M=\exp(\Omega(\din^2 \dout^2))$.
	\end{lemma}
\begin{proof}
	It is sufficient to show that for $U,V\sim \Haar(\din \dout)$:
	\begin{align*}
	\pr{\|\cJ_U-\cJ_V\|_1\le \eps/2 }\le \exp\left(-\Omega(\din^2 \dout^2)\right).
	\end{align*}
  because, once this concentration inequality holds, we can choose our family randomly, and by the union bound, it will be $\eps/2$-separated with an overwhelming probability ($1-\exp\left(-\Omega(\din^2 \dout^2)\right)$) using the inequality $d_\diamond(\cN_U,\cN_V)\ge \|\cJ_U-\cJ_V\|_1$. 
	First, let us lower bound the expected value. 
	\begin{align*}
	\ex{\|\cJ_U-\cJ_V\|_1}&\ge \frac{\eps}{\din \dout} \ex{\|U+U^\dagger-V-V^\dagger\|_1} \\&-\frac{\eps}{\din \dout^2}\ex{\|  \tr_2(U+U^\dagger-V-V^\dagger)\otimes \dI   \|_1}.
	\end{align*}
	On one hand, we can upper bound the second expectation using  the triangle and the Cauchy-Schwartz inequalities:
	\begin{align*}
 &\ex{\|\tr_2(U+U^\dagger-V-V^\dagger)\otimes \dI   \|_1}
	\le 4\ex{\|  \tr_2(U)\otimes \dI   \|_1}
 \\&\le 4\sqrt{\din \dout}\ex{\|  \tr_2(U)\otimes \dI   \|_2}\le 4\sqrt{\din \dout}
	 \sqrt{\ex{\tr(\tr_2(U) \tr_2(U^\dagger)   \otimes \dI)   }}
	 \\&= 4\sqrt{\din \dout} \sqrt{\dout}
	 \sqrt{\ex{ \sum_{i}  \sum_{k,l}   \bra{i}   \otimes \bra{k}U \dI \otimes \ket{k}\bra{l}U^\dagger\ket{i}  \otimes \ket{l}       }}
	  \\&= 4\sqrt{\din \dout}\sqrt{\dout}
	 \sqrt{\ex{ \sum_{i=1}^{\din}  \sum_{k,l=1}^{\dout}  \frac{\din\delta_{k,l}}{\din \dout}    }}=4\din \dout.
	\end{align*}
	On the other hand, we can lower bound the first expectation using Hölder's inequality.
	\begin{align*}
	\ex{\|U+U^\dagger-V-V^\dagger\|_1}
&\ge \sqrt{\frac{(\ex{\tr(U+U^\dagger-V-V^\dagger)^2})^3}{\ex{\tr(U+U^\dagger-V-V^\dagger)^4}}}
	\\&\ge \sqrt{\frac{(4\din \dout)^3}{16\din \dout}}=2\din \dout.
	\end{align*}
	Therefore:
		\begin{align*}
	\ex{\|\cJ_U-\cJ_V\|_1}
& \ge \frac{\eps}{\din \dout} \ex{\|U+U^\dagger-V-V^\dagger\|_1} -\frac{4\eps}{\din \dout^2}\ex{\|  \tr_2U\otimes \dI   \|_1}
	\\&\ge 2\eps-\frac{4\eps }{\dout }\ge \eps~~~~~\text{ for }~~~~~ \dout\ge 4.
	\end{align*}
	Now, we claim that the function $(U,V)\mapsto \|\cJ_U-\cJ_V\|_1$ is $\frac{8\eps}{\sqrt{\din\dout}}$-Lipschitz. Indeed, we have $\|\tr_2(X)\otimes \dI\|_1 \le \sqrt{\din\dout}\|\tr_2(X)\otimes \dI\|_2 = \sqrt{\din}\dout\|\tr_2(X)\|_2\le \sqrt{\din\dout}\dout\|X\|_2 $ where the last inequality can be found in \cite{lidar2008distance}. Therefore, by letting $X=U-U'$ and $Y=V-V'$ and using the triangle inequality we obtain:
	\begin{align*}
	&|\|\cJ_U-\cJ_V\|_1-\|\cJ_{U'}-\cJ_{V'}\|_1|\\&\le \frac{2\eps}{\din\dout}\Bigg[\| X\|_1+\| Y\|_1 +  \left\| \tr_2(X)\otimes \frac{\dI}{\dout}\right\|_1+\left\| \tr_2(Y)\otimes \frac{\dI}{\dout}\right\|_1\Bigg]
\\&\le\frac{2\sqrt{\din\dout}\eps}{\din\dout}\left(   \| U-U'\|_2 +\| V-V'\|_2 \right) + \frac{2\sqrt{\din\dout}\dout\eps}{\din\dout^2}\left(   \| U-U'\|_2 +\| V-V'\|_2 \right) 
\\&\stackrel{\text{}}{\le}
\frac{8\eps}{\sqrt{\din\dout}}\|  (U,V)-(U',V')\|_2  ~~~~~~~~~~~~~~~~~~~~~~~~~~~~~~~~~~~~~~~~
~~~~(\text{Cauchy-Schwartz})
	\end{align*}
	so by the concentration inequality for Lipschitz functions of $\Haar$ measure \cite{meckes2013spectral}: 
	\begin{align*}
		&\pr{\|\cJ_U-\cJ_V\|_1\le \eps/2 }\\&\le 	\pr{\|\cJ_U-\cJ_V\|_1-\ex{\|\cJ_U-\cJ_V\|_1}\le -\eps/2 }
		\\&\le\exp\left(-\frac{\din\dout\eps^2}{48\times 64\eps^2/\din\dout}\right)= \exp\left(-\Omega(\din^2\dout^2)\right).
	\end{align*}
\end{proof}
Now, we use this $\eps/2$-separated family of quantum channels $\{\cN_x\}_{x\in [M]}$ (corresponding to the Choi states $\{\cJ_x\}_{x\in [M]}$ found in Lemma~\ref{lem:construction}) to encode a  uniformly random message $X\sim\unif([M])$ by the map $X\mapsto \cN_X$. Using a learning algorithm for process tomography with precision $\eps/4$ and an error probability at most $1/3$, a decoder $Y$ can find $X$ with the same error probability because the family $\{\cN_x\}_{x\in [M]}$ is $\eps/2$-separated. By Fano's inequality, the encoder and decoder should share at least $\Omega(\log(M))$ nats of information.
\begin{lemma} \cite{fano1961transmission}\label{lem:fano} We have
    \begin{align*}
        \cI(X:Y) \ge 2/3 \log(M)-\log(2)\ge \Omega(d^4). 
    \end{align*}
\end{lemma}
The remaining part of the proof is to upper bound this mutual information in terms of the number of measurements $N$, the dimensions $\din,\dout$, and the precision parameter $\eps$. Intuitively, the mutual information, after a few measurements, is very small and then it increases when the number of measurements increases. 
To make this intuition formal, let $N$ be a number of measurements sufficient for process tomography and let $(I_1,\dots ,I_N)$ be the  observations of the learning algorithm, we apply first the data processing inequality to relate the mutual information between the encoder and the decoder with the mutual information between the uniform random variable $X$ and the observations  $(I_1,\dots ,I_N)$:
\begin{align*}
       \cI(X:Y) \le     \cI(X:I_1,\dots, I_N).  
\end{align*}
Then we apply the chain rule for the mutual information:
\begin{align*}
    \cI(X:I_1,\dots, I_N) &=\sum_{t=1}^N \cI(X:I_t| I_{\le t-1})
\end{align*}
where we use the notation $I_{\le t}= (I_1,\dots, I_t)$ and $\cI(X:I_t| I_{\le t-1})$ is the conditional mutual information between $X$ and $I_t$ given $I_{\le t-1}$. A learning algorithm $\cA$ would choose the input states $\{\rho_t\}_{t\in [N]}$ and measurement devices $\{\cM_t\}_{t\in [N]}$ which can be chosen to have the form $\cM_t= \{\mu^t_i \proj{\phi^t_i}\}_{i\in \cI_t}$ where $\mu^t_i\ge 0$ and $\spr{\phi^t_i}{\phi^t_i}=1$ for all $t,i$. 
Using Jensen's inequality, we can prove the following upper bound on the conditional mutual information:
\begin{lemma}\label{lem: cond-mut-inf}   For $x\in [M]$, let $\cM_x=\cN_x-\cD$ where $\cD(\rho)=\tr(\rho)\frac{\dI}{\dout}$ is the completely depolarizing channel. We have for all $t\in \{1,\dots, N\}$:
\begin{align*}
	& \cI(X:I_t| I_{\le t-1})
 \le \frac{3}{M}\sum_{ i\in \cI_t, x\in [M]}  \mu^t_{i}\bra{\phi_{i}^t} \id\otimes\cD(\rho_t) \ket{{\phi_{i}^t}}\left(\frac{\bra{\phi_{i}^t} \id\otimes\cM_x(\rho_t) \ket{{\phi_{i}^t}}}{\bra{\phi_{i}^t} \id\otimes\cD(\rho_t) \ket{{\phi_{i}^t}}} \right)^2
\end{align*}
\end{lemma}
\begin{proof}
    Let $t \in \{1,\dots, N\}$ and $x\in [M]$. 
    Let $i=(i_1,\dots, i_t)\in (\cI_1,\dots, \cI_t)$, we can express the joint probability $p$ of $(X,I_1,\dots, I_t)$ as follows:
    \begin{align*}
        p(x,i_1,\dots, i_t)=\frac{1}{M} \prod_{k=1}^t\mu^k_{i_k}\bra{\phi_{i_k}^k} \id\otimes\cN_x(\rho_k) \ket{{\phi_{i_k}^k}}
    \end{align*}
We can remark that, for all $1\le k\le t$:
\begin{align*}
    p(x,i_{\le k})&=\mu_{i_k}^k\bra{\phi_{i_k}^k} \id\otimes\cN_x(\rho_k) \ket{{\phi_{i_k}^k}}p(x,i_{\le k-1})
    \\&=\mu_{i_k}^k\bra{\phi_{i_k}^k} \id\otimes\cD(\rho_k) \ket{{\phi_{i_k}^k}}(1+ \Phi_{x,i_k}^k)  p(x,i_{\le k-1})
\end{align*}
where $\Phi_{x,i_k}^k=\frac{\bra{\phi_{i_k}^k} \id\otimes\cM_x(\rho_k) \ket{{\phi_{i_k}^k}}}{\bra{\phi_{i_k}^k} \id\otimes\cD(\rho_k) \ket{{\phi_{i_k}^k}}}  $ because $\cD+\cM_x=\cN_x$.
        So, the ratio of conditional probabilities can be written as:
    \begin{align*}
       & \frac{p(x,i_t| i_{\le t-1})}{p(x|i_{\le t-1})p(i_t|i_{\le t-1})}=\frac{p(x,i_{\le t})p(i_{\le t-1}) }{p(x,i_{\le t-1})p(i_{\le t})}
    \\&=\frac{\mu_{i_t}^t\bra{\phi_{i_t}^t} \id\otimes\cD(\rho_t) \ket{{\phi_{i_t}^t}}(1+ \Phi_{x,i_t}^t)  p(x,i_{\le t-1})p(i_{\le t-1}) }{p(x,i_{\le t-1})\sum_y p(y,i_{\le t})}
     \\&=\frac{\mu_{i_t}^t\bra{\phi_{i_t}^t} \id\otimes\cD(\rho_t) \ket{{\phi_{i_t}^t}}(1+ \Phi_{x,i_t}^t)  p(i_{\le t-1}) }{\sum_y p(y,i_{\le t})}
    \\&=\frac{\mu_{i_t}^t\bra{\phi_{i_t}^t} \id\otimes\cD(\rho_t) \ket{{\phi_{i_t}^t}}(1+ \Phi_{x,i_t}^t)    p(i_{\le t-1}) }{\sum_y    \mu_{i_t}^t\bra{\phi_{i_t}^t} \id\otimes\cD(\rho_t) \ket{{\phi_{i_t}^t}}(1+ \Phi_{y,i_t}^t)       p(y,i_{\le {t-1}})}
   \\&= \frac{(1+ \Phi_{x,i_t}^t)    p(i_{\le t-1}) }{\sum_y    (1+ \Phi_{y,i_t}^t)       p(y,i_{\le {t-1}})}
    = \frac{(1+ \Phi_{x,i_t}^t)    }{\sum_y    (1+ \Phi_{y,i_t}^t)       p(y|i_{\le {t-1}})}
    \end{align*}
Therefore by Jensen's inequality:
\begin{align*}
   & \cI(X:I_t| I_{\le t-1})=\ex{\log\left( \frac{p(x,i_t| i_{\le t-1})}{p(x|i_{\le t-1})p(i_t|i_{\le t-1})}\right) }
    \\&= \ex{\log\left(\frac{(1+\Phi_{x,i_t}^t) }{\sum_y p(y|i_{\le t-1})(1+ \Phi_{y,i_t}^t)}\right) }
    \\&\le \ex{\log(1+\Phi_{x,i_t}^t) -\sum_y p(y|i_{\le t-1})\log( 1+ \Phi_{y,i_t}^t)}
     \\&= \ex{\log(1+\Phi_{x,i_t}^t)} -\sum_y \ex{p(y|i_{\le t-1})\log( 1+ \Phi_{y,i_t}^t)}.
\end{align*}
The first term can be upper bounded using the inequality $\log(1+x)\le x$ verified for all $x\in (-1,\infty)$:
\begin{align*}
   & \ex{\log(1+\Phi_{x,i_t}^t)}= \mathds{E}_{x,i\sim p} \log(1+\Phi_{x,i_t}^t)
    \\&\le\mathds{E}_{x,i\sim p} \Phi_{x,i_t}^t
    =\mathds{E}_{x,i\sim p_{\le t}} \Phi_{x,i_t}^t
   \\& = \mathds{E}_{x,i\sim p_{\le t-1}}  \sum_{i_t}\mu_{i_t}^t \bra{\phi_{i_t}^t} \id\otimes\cD(\rho_t) \ket{{\phi_{i_t}^t}}(1+\Phi_{x,i_t}^t)\Phi_{x,i_t}^t
   \\&= \mathds{E}_{x,i\sim p_{\le t-1}}  \sum_{i_t}\mu_{i_t}^t \bra{\phi_{i_t}^t} \id\otimes\cD(\rho_t) \ket{{\phi_{i_t}^t}} (\Phi_{x,i_t}^t)^2
   \\&=\frac{1}{M}\sum_{x=1}^M \sum_{i_t}\mu_{i_t}^t \bra{\phi_{i_t}^t} \id\otimes\cD(\rho_t) \ket{{\phi_{i_t}^t}}(\Phi_{x,i_t}^t)^2
\end{align*}
because $\sum_{i_t} \mu_{i_t}^t \bra{\phi_{i_t}^t} \id\otimes\cD(\rho_t) \ket{{\phi_{i_t}^t}}\Phi_{x,i_t}^t = \tr( \id\otimes\cM_x(\rho_t)) = \tr( \id\otimes\cN_x(\rho_t)) -\tr( \id\otimes\cD(\rho_t))= \tr(\rho_t)- \tr(\rho_t)=0    $ and  we use the condition that the algorithm is non-adaptive in the last line.
\\On the other hand, the second term can be upper bounded using the inequality $-\log(1+x)\le -x+x^2/2$ verified for all $x\in (-1/2,\infty)$. Let $\lambda_{i_t}^t= \mu_{i_t}^t \bra{\phi_{i_t}^t} \id\otimes\cD(\rho_t) \ket{{\phi_{i_t}^t}} $, we have~:
\begin{align*}
   &\ex{-\sum_y p(y|i_{\le t-1})\log( 1+ \Phi_{y,i_t}^t) }
 \\&= \sum_y\mathds{E}_{x,i\sim p}   p(y|i_{\le t-1})(-\log)( 1+ \Phi_{y,i_t}^t)
   \\&=\sum_y\mathds{E}_{x,i\sim p_{\le t}}  p(y|i_{\le t-1})(-\log)( 1+ \Phi_{y,i_t}^t)
   \\&\le \sum_y\mathds{E}_{x,i\sim p_{\le t}}  p(y|i_{\le t-1})( - \Phi_{y,i_t}^t+ (\Phi_{y,i_t}^t)^2/2)
    \\&\le\sum_y\mathds{E}_{x,i\sim p_{\le t-1}}  p(y|i_{\le t-1})\sum_{i_t}\lambda_{i_t}^t( (\Phi_{x,i_t}^t)^2+ (\Phi_{y,i_t}^t)^2)
    \\&=2\sum_y\mathds{E}_{x,i\sim p_{\le t-1}}  p(y|i_{\le t-1})\sum_{i_t}\lambda_{i_t}^t (\Phi_{x,i_t}^t)^2
    \\&=2\mathds{E}_{x,i\sim p_{\le t-1}}  \sum_{i_t}\mu_{i_t}^t \bra{\phi_{i_t}^t} \id\otimes\cD(\rho_t) \ket{{\phi_{i_t}^t}}(\Phi_{x,i_t}^t)^2
     \\&=2\mathds{E}_{x,i\sim p_{\le t-1}} \sum_{i_t}\lambda_{i_t}^t (\Phi_{x,i_t}^t)^2
    \\&=2\frac{1}{M}\sum_{x=1}^M \sum_{i_t}\mu_{i_t}^t \bra{\phi_{i_t}^t} \id\otimes\cD(\rho_t) \ket{{\phi_{i_t}^t}}(\Phi_{x,i_t}^t)^2
\end{align*}
where we use the condition that the algorithm is non-adaptive in the last line. Since the conditional mutual information  is upper bounded by the sum of these two terms, the upper bound on the conditional mutual information follows.
\end{proof}
It remains to approximate every mean $\frac{1}{M}\sum_{x=1}^M$ by the expectation $\mathds{E}_U$.
\begin{lemma}\label{lem:mean_x-meanU}We have with at least a probability $9/10$:
\begin{align*}
 & \frac{1}{M}\sum_{t, i, x}  \mu^t_{i}\bra{\phi_{i}^t} \id\otimes\cD(\rho_t) \ket{{\phi_{i}^t}}\left(\frac{\bra{\phi_{i}^t} \id\otimes\cM_x(\rho_t) \ket{{\phi_{i}^t}}}{\bra{\phi_{i}^t} \id\otimes\cD(\rho_t) \ket{{\phi_{i}^t}}} \right)^2
 \\&\le \sum_{t, i}  \mu^t_{i}\bra{\phi_{i}^t} \id\otimes\cD(\rho_t) \ket{{\phi_{i}^t}}\mathds{E}_U\left(\frac{\bra{\phi_{i}^t} \id\otimes\cM_U(\rho_t) \ket{{\phi_{i}^t}}}{\bra{\phi_{i}^t} \id\otimes\cD(\rho_t) \ket{{\phi_{i}^t}}} \right)^2
 +16N\eps^2\sqrt{\frac{\log(10)}{M}}.
\end{align*}
\end{lemma}
\begin{proof}
    Denote by $f_x^t$ 
    the function $\ket{\phi}\mapsto\frac{ \bra{\phi}\id\otimes\cM_x(\rho_t) \ket{{\phi}}^2}{\bra{\phi} \id\otimes\cD(\rho_t) \ket{{\phi}}^2}$. 
    We claim that the functions $f^t_x$ are bounded. Indeed, we 
can write $\rho_t= \sum_i \lambda_i \proj{\psi_i}$ and every $\ket{\psi_i}$ can be written as $\ket{\psi_i}=A_i \otimes \dI \ket{\Psi}$  so for a unit vector $\ket{\phi}$, 
    we have:
    \begin{align*}
       & f^t_x(\ket{\phi})=\frac{ \bra{\phi}\id\otimes\cM_x(\rho_t) \ket{{\phi}}^2}{\bra{\phi} \id\otimes\cD(\rho_t) \ket{{\phi}}^2}
       \\&=\frac{4\eps^2\left(\bra{\phi}\sum_i \lambda_i (A_i \otimes \dI)\left(U_x-\tr_2(U_x)\otimes\frac{\dI}{\dout}\right)(A_i^\dagger\otimes \dI)\ket{\phi} \right)^2}{\bra{\phi}\sum_i \lambda_i A_iA_i^\dagger\otimes \dI \ket{\phi}^2}
        \\&\le \frac{4\eps^2\bra{\phi}\sum_i \lambda_i (A_i \otimes \dI)(A_i^\dagger\otimes \dI)\ket{\phi}^2\left\|U_x-\tr_2(U_x)\otimes\frac{\dI}{\dout}\right\|_\infty^2}{\bra{\phi}\sum_i \lambda_i A_iA_i^\dagger\otimes \dI \ket{\phi}^2}
         \\&\le \frac{16\eps^2\bra{\phi}\sum_i \lambda_i (A_i \otimes \dI)(A_i^\dagger\otimes \dI)\ket{\phi}^2}{\bra{\phi}\sum_i \lambda_i A_iA_i^\dagger\otimes \dI \ket{\phi}^2}=16\eps^2
    \end{align*}
    where we used that $\|U_x\|_\infty= 1$ and $\|\tr_2(U_x)\|_\infty\le \dout$ for a unitary $U_x$. But we have $\sum_{i} \mu^t_{i}\bra{\phi_{i}^t} \id\otimes\cD(\rho_t) \ket{{\phi_{i}^t}}=\tr(\id\otimes\cD(\rho_t))=1$ so for all $x\in [M]$:
    \begin{align*}
       \sum_{t, i}  \mu^t_{i}\bra{\phi_{i}^t} \id\otimes\cD(\rho_t) \ket{{\phi_{i}^t}}\left(\frac{\bra{\phi_{i}^t} \id\otimes\cM_x(\rho_t) \ket{{\phi_{i}^t}}}{\bra{\phi_{i}^t} \id\otimes\cD(\rho_t) \ket{{\phi_{i}^t}}} \right)^2\le 16N\eps^2.
    \end{align*}
    Therefore, by Hoeffding's inequality \cite{hoeff} and the union bound, we have with a probability at least $9/10$:
    \begin{align*}
     &  \frac{1}{M}  \sum_{x,t, i}  \mu^t_{i}\bra{\phi_{i}^t} \id\otimes\cD(\rho_t) \ket{{\phi_{i}^t}}\left(\frac{\bra{\phi_{i}^t} \id\otimes\cM_x(\rho_t) \ket{{\phi_{i}^t}}}{\bra{\phi_{i}^t} \id\otimes\cD(\rho_t) \ket{{\phi_{i}^t}}} \right)^2
       \\&\le \sum_{t, i}  \mu^t_{i}\bra{\phi_{i}^t} \id\otimes\cD(\rho_t) \ket{{\phi_{i}^t}}\mathds{E}_U\left(\frac{\bra{\phi_{i}^t} \id\otimes\cM_U(\rho_t) \ket{{\phi_{i}^t}}}{\bra{\phi_{i}^t} \id\otimes\cD(\rho_t) \ket{{\phi_{i}^t}}} \right)^2
       +16N\eps^2\sqrt{\frac{\log(10)}{M}}.
    \end{align*}
\end{proof}
These two Lemmas~\ref{lem: cond-mut-inf},~\ref{lem:mean_x-meanU} imply: 
    \begin{align}
	& \cI(X:I_1,\dots, I_N) =\sum_{t=1}^N \cI(X:I_t| I_{\le t-1})\notag
 \\&\le \frac{3}{M}\sum_{ x,t, i}  \mu^t_{i}\bra{\phi_{i}^t} \id\otimes\cD(\rho_t) \ket{{\phi_{i}^t}}\mathds{E}\left(\frac{\bra{\phi_{i}^t} \id\otimes\cM_x(\rho_t) \ket{{\phi_{i}^t}}}{\bra{\phi_{i}^t} \id\otimes\cD(\rho_t) \ket{{\phi_{i}^t}}} \right)^2 \notag
  \\&\le 3\sum_{ t, i}  \mu^t_{i}\bra{\phi_{i}^t} \id\otimes\cD(\rho_t) \ket{{\phi_{i}^t}}\mathds{E}\left(\frac{\bra{\phi_{i}^t} \id\otimes\cM_U(\rho_t) \ket{{\phi_{i}^t}}}{\bra{\phi_{i}^t} \id\otimes\cD(\rho_t) \ket{{\phi_{i}^t}}} \right)^2+48N\eps^2\sqrt{\frac{\log(10)}{M}} \notag
	\\&\le  3N\sup_{t,i}\ex{\left(\frac{\bra{\phi_{i}^t} \id\otimes\cM_U(\rho_t) \ket{{\phi_{i}^t}}}{\bra{\phi_{i}^t} \id\otimes\cD(\rho_t) \ket{{\phi_{i}^t}}} \right)^2}+48N\eps^2\sqrt{\frac{\log(10)}{M}}  \label{lem:upper bound on mutual info}
\end{align}
where we used that fact that for all $t\in [N]$: $\sum_{i}  \mu^t_{i}\bra{\phi_{i}^t} \id\otimes\cD(\rho_t) \ket{{\phi_{i}^t}}= \tr(\id\otimes\cD(\rho_t) )=\tr(\rho_t)=1 $.
The error probability  $1/10$ of this approximation  can be absorbed in the construction above by asking the unitaries $\{U_x\}_{x\in [M]}$ not only to satisfy the separability condition, but also to satisfy the inequalities in  Lemma~\ref{lem:mean_x-meanU}:
    \begin{align*}
 & \frac{1}{M}\sum_{ t, i, x}  \mu^t_{i}\bra{\phi_{i}^t} \id\otimes\cD(\rho_t) \ket{{\phi_{i}^t}}\left(\frac{\bra{\phi_{i}^t} \id\otimes\cM_x(\rho_t) \ket{{\phi_{i}^t}}}{\bra{\phi_{i}^t} \id\otimes\cD(\rho_t) \ket{{\phi_{i}^t}}} \right)^2
 \\&\le \sum_{t, i}  \mu^t_{i}\bra{\phi_{i}^t} \id\otimes\cD(\rho_t) \ket{{\phi_{i}^t}}\mathds{E}_U\left(\frac{\bra{\phi_{i}^t} \id\otimes\cM_U(\rho_t) \ket{{\phi_{i}^t}}}{\bra{\phi_{i}^t} \id\otimes\cD(\rho_t) \ket{{\phi_{i}^t}}} \right)^2
 +48N\eps^2\sqrt{\frac{\log(10)}{M}}.
\end{align*}

Now fix $t\in [N], i_t\in \cI_t$ and $\ket{\phi} = \ket{{\phi_{i_t}^t}}$. Recall that  we 
can write $\rho_t= \sum_i \lambda_i \proj{\psi_i}$,
the maximally entangled state is denoted $\ket{\Psi}=\frac{1}{\sqrt{\din}}\sum_{i=1}^{\din}\ket{ii}$ and every $\ket{\psi_i}$ can be written as $\ket{\psi_i}=A_i \otimes \dI \ket{\Psi}$ so:
\begin{align}
\id\otimes\cD(\rho_t)&= \sum_i \lambda_i  (\id\otimes \cD)(A_i \otimes \dI \proj{\Psi} A_i^\dagger\otimes \dI)\notag
\\&=\sum_i \lambda_i (A_i \otimes \dI)  \id\otimes \cD(\proj{\Psi}) (A_i^\dagger\otimes \dI)\notag
\\&=\sum_i \lambda_i (A_i \otimes \dI)  \frac{\dI}{\din\dout}(A_i^\dagger\otimes \dI)\notag
\\&=\frac{\sum_i \lambda_i A_iA_i^\dagger}{\din}\otimes \frac{\dI}{\dout}. \label{myeq1}
\end{align}
On the other hand, using the notation $V=U-\tr_2(U)\otimes \frac{\dI}{\dout}$, we can write: 
\begin{align*}
&\id\otimes\cM(\rho_t)= \sum_i \lambda_i  \id \otimes \cM(A_i \otimes \dI \proj{\Psi} A_i^\dagger\otimes \dI)
\\&=\sum_i \lambda_i  (A_i \otimes \dI)\id \otimes (\cN-\cD)( \proj{\Psi}) (A_i^\dagger\otimes \dI)
\\&=\sum_i \lambda_i  (A_i \otimes \dI)\left(\cJ_{\cN}-\frac{\dI}{\din\dout}\right) (A_i^\dagger\otimes \dI)
\\&=\frac{\eps}{\din\dout}\sum_i \lambda_i  A_i \otimes \dI\left(U+U^\dagger-\tr_2(U+U^\dagger)\otimes \frac{\dI}{\dout}\right) A_i^\dagger\otimes \dI)
\\&=\frac{\eps}{\din\dout}\sum_i \lambda_i\left[  (A_i \otimes \dI)V (A_i^\dagger\otimes \dI)+(A_i \otimes \dI) V^\dagger (A_i^\dagger\otimes \dI)\right].
\end{align*}
By Ineq.~\ref{lem:upper bound on mutual info}, we need to control the expectation $\mathds{E}_U\bra{\phi} \id\otimes\cM_U(\rho_t)\ket{\phi}^2$. First, we replace $\id\otimes\cM(\rho_t)$ with the latter expression, then we apply the inequality $(x+y)^2 \le 2x^2+2y^2$ to separate the terms involving $U$ and the terms involving $\tr_2(U)$. The first term can be computed and bounded as follows.
\begin{align}
&\frac{4\eps^2}{\din^2\dout^2}\ex{\left|\bra{\phi}\left(\sum_{i} \lambda_i (A_i \otimes \dI)U (A_i^\dagger\otimes \dI) \right)\ket{\phi}\right|^2}\notag
\\&= \frac{4\eps^2}{\din^2\dout^2} \sum_{i,j} \frac{\lambda_i \lambda_j }{\din\dout}\left|\tr\left(A_i^\dagger\otimes \dI \ket{\phi} \bra{\phi}A_j\otimes \dI\right)\right|^2\notag
\\&\stackrel{\text{(CS)}}{\le}  \frac{4\eps^2}{\din^2\dout^2} \sum_{i,j} \frac{\lambda_i \lambda_j }{\din\dout}\bra{\phi} A_iA_i^\dagger\otimes \dI \ket{\phi} \bra{\phi} A_jA_j^\dagger\otimes \dI \ket{\phi}  \notag
\\&= \frac{4\eps^2}{\din^3\dout^3} \left(\bra{\phi}\sum_{i} \lambda_i A_iA_i^\dagger\otimes \dI \ket{\phi} \right)^2. \label{myineq1}
\end{align}
Let's move to the second term which involves the partial trace.  Let $M_{ij}=(A_i^\dagger\otimes \dI) \proj{\phi}(A_j\otimes \dI)$.
\begin{align*}
&\frac{4\eps^2}{\din^2\dout^2} \ex{\left|\bra{\phi}\sum_{i} \lambda_i  (A_i \otimes \dI)\left(\tr_2(U)\otimes \frac{\dI}{\dout}\right) (A_i^\dagger\otimes \dI) \ket{\phi}\right|^2  }
\\&=\frac{4\eps^2}{\din^2\dout^4} \sum_{i,j} \lambda_i \lambda_j \ex{\tr\big[\left(\tr_2(U)\otimes \dI\right) M_{i,j}\left(\tr_2(U^\dagger)\otimes \dI\right) M_{ij}^\dagger)\big] }
\\&= \frac{4\eps^2}{\din^2\dout^4} \sum_{i,j} \lambda_i \lambda_j  \sum_{x,y,z,t=1}^{\din}\sum_{k,l=1}^{\dout}\ex{\bra{xk}U\ket{yk}\bra{zl}U^\dagger\ket{tl}} \tr\Big[\left( \ket{y}\bra{x}\otimes \dI M_{ij}\ket{t}\bra{z}\otimes \dI M_{ij}^\dagger\right) \Big] 
\\&= \frac{4\eps^2}{\din^3\dout^5} \sum_{i,j} \lambda_i \lambda_j  \sum_{x=t,y=z=1}^{\din}\sum_{k=l=1}^{\dout}\tr\Big[\left( \ket{y}\bra{x}\otimes \dI M_{ij}\ket{x}\bra{y}\otimes \dI M_{ij}^\dagger\right) \Big] 
\\&= \frac{4\eps^2}{\din^3\dout^4} \sum_{i,j} \lambda_i \lambda_j  \sum_{x,y=1}^{\din}\tr\Big[\left( \ket{y}\bra{x}\otimes \dI M_{ij}\ket{x}\bra{y}\otimes \dI M_{ij}^\dagger\right) \Big]. 
\end{align*}
To control the latter expression, we write $\ket{\phi}= B^\dagger\otimes \dI \ket{\Psi}$ so that $M_{i,j}= (A_i^\dagger\otimes \dI) \proj{\phi}(A_j\otimes \dI)= (A_i^\dagger B^\dagger\otimes \dI) \proj{\Psi}(BA_j\otimes \dI)$. Using the property of the maximally entangled state $\bra{\Psi}M\otimes \dI\ket{\Psi}=\frac{1}{\din}\tr(M)$ we obtain:
\begin{align*}
    & \sum_{x,y=1}^{\din}\tr\Big[\left( \ket{y}\bra{x}\otimes \dI M_{ij}\ket{x}\bra{y}\otimes \dI M_{ij}^\dagger\right) \\&=  \sum_{x,y=1}^{\din}\tr\left( \ket{y}\bra{x}\otimes \dI (A_i^\dagger B^\dagger\otimes \dI) \proj{\Psi}(BA_j\otimes \dI) \ket{x}\bra{y}\otimes \dI (A_j^\dagger B^\dagger\otimes \dI) \proj{\Psi}(BA_i\otimes \dI)\right)
    \\&= \sum_{x,y=1}^{\din} \bra{\Psi }(BA_j\otimes \dI) \ket{x}\bra{y}\otimes \dI (A_j^\dagger B^\dagger\otimes \dI)\ket{\Psi} \bra{\Psi} (BA_i\otimes \dI)^\dagger\ket{y}\bra{x}\otimes \dI (A_i^\dagger B^\dagger\otimes \dI) \ket{\Psi}
    \\&= \frac{1}{\din^2}\sum_{x,y=1}^{\din} \tr(BA_j\ket{x}\bra{y} A_j^\dagger B^\dagger ) \tr( BA_i \ket{y}\bra{x}A_i^\dagger B^\dagger)
      \\&= \frac{1}{\din^2}\sum_{x,y=1}^{\din} \bra{y} A_j^\dagger B^\dagger BA_j\ket{x} \bra{x}A_i^\dagger B^\dagger BA_i \ket{y}
    =\frac{1}{\din^2}\tr\left(  A_j^\dagger B^\dagger BA_jA_i^\dagger B^\dagger BA_i \right).
\end{align*}
On the other hand, we can write
\begin{align*}
    \bra{\phi}\sum_{i} \lambda_i A_iA_i^\dagger\otimes \dI \ket{\phi}&= \bra{\Psi}\sum_{i} \lambda_i BA_iA_i^\dagger B^\dagger \otimes \dI \ket{\Psi}
    = \frac{1}{\din}\tr\left(  \sum_i \lambda_i  A_i^\dagger B^\dagger BA_i \right).
\end{align*}
Note that the matrix $\sum_i \lambda_i A_i^\dagger B^\dagger BA_i $ is positive semi-definite so: 
\begin{align*}
   & \sum_{i,j} \lambda_i \lambda_j\frac{1}{\din^2}\tr\left(  A_j^\dagger B^\dagger BA_jA_i^\dagger B^\dagger BA_i \right) = \frac{1}{\din^2}\tr\left(  \sum_i \lambda_i A_i^\dagger B^\dagger BA_i \right)^2\notag
    \\&\le \Bigg[\frac{1}{\din}\tr\left(  \sum_i \lambda_i  A_i^\dagger B^\dagger BA_i \right)\Bigg]^2
    = \bra{\phi}\sum_{i} \lambda_i A_iA_i^\dagger\otimes \dI \ket{\phi}^2.
\end{align*}
Hence
\begin{align}
&\frac{4\eps^2}{\din^2\dout^2} \ex{\left|\bra{\phi}\sum_{i} \lambda_i  (A_i \otimes \dI)\left(\tr_2(U)\otimes \frac{\dI}{\dout}\right) (A_i^\dagger\otimes \dI) \ket{\phi}\right|^2  }\notag
\\&= \frac{4\eps^2}{\din^3\dout^4} \sum_{i,j} \lambda_i \lambda_j  \sum_{x,y=1}^{\din}\tr\Big[\left( \ket{y}\bra{x}\otimes \dI M_{ij}\ket{x}\bra{y}\otimes \dI M_{ij}^\dagger\right) \Big]\notag
\\&= \frac{4\eps^2}{\din^3\dout^4} \sum_{i,j} \lambda_i \lambda_j  \frac{1}{\din^2}\tr\left(  A_j^\dagger B^\dagger BA_jA_i^\dagger B^\dagger BA_i \right)\notag
\\&\le \frac{4\eps^2}{\din^3\dout^4} \bra{\phi}\sum_{i} \lambda_i A_iA_i^\dagger\otimes \dI \ket{\phi}^2\label{myineq2}
\end{align}
Using the equality~\eqref{myeq1} and the two inequalities~\eqref{myineq1} and~\eqref{myineq2}, we deduce:
\begin{align*}
&\ex{\left(\frac{\bra{\phi} \id\otimes\cM_U(\rho_t) \ket{{\phi}}}{\bra{\phi} \id\otimes\cD(\rho_t) \ket{{\phi}}} \right)^2}
\le\frac{\frac{8\eps^2}{\din^3\dout^3}   \left(\bra{\phi}\sum_{i} \lambda_i A_iA_i^\dagger\otimes \dI \ket{\phi} \right)^2}{\bra{\phi} \frac{\sum_i \lambda_i A_iA_i^\dagger}{\din}\otimes \frac{\dI}{\dout}\ket{\phi}^2}=\frac{8\eps^2}{\din\dout}.
\end{align*}
Therefore using the inequality~\eqref{lem:upper bound on mutual info}:
\begin{align*}
	&\cI(X:I_1,\dots, I_N)=\sum_{t=1}^N \cI(X:I_t| I_{\le t-1})
 \\&\le  3 N\sup_{t,i_t}\ex{\left(\frac{\bra{\phi_{i_t}^t} \id\otimes\cM_U(\rho_t) \ket{{\phi_{i_t}^t}}}{\bra{\phi_{i_t}^t} \id\otimes\cD(\rho_t) \ket{{\phi_{i_t}^t}}} \right)^2}+48N\eps^2\sqrt{\frac{\log(10)}{M}}
 \\&\le 24N\frac{\eps^2}{\din\dout}+48N\eps^2\sqrt{\frac{\log(10)}{M}}\le  \cO\left(N\frac{\eps^2}{\din\dout} \right)
\end{align*}
because $M=\exp(\Omega(\din^2\dout^2))$. But from the data processing inequality and Lemma~\ref{lem:fano}, $\cI(X:I_1,\dots, I_N)\ge \cI(X:Y) \ge \Omega(\din^2\dout^2) $, we deduce that: 
\begin{align*}
   \cO\left(N\frac{\eps^2}{\din\dout} \right)\ge \cI(X:I_1,\dots, I_N)\ge \Omega(\din^2\dout^2).\end{align*}
Finally, the lower bound follows:
\begin{align*}
   N\ge \Omega\left( \frac{\din^3\dout^3}{\eps^2}\right). 
\end{align*}
\end{proof}
To assess this lower bound, it is necessary to design an algorithm for quantum process tomography. This will be the object of the following section.

\section{Upper bound}
In this section, we propose an  upper bound on the complexity  of the quantum process tomography problem. We generalize  the algorithm proposed by \cite{surawy2022projected} which is ancilla-free.
\begin{theorem}\cite{surawy2022projected} There is an ancilla-free process tomography algorithm that learns a quantum channel (of $\din=\dout=d$) in the distance $d_2$ using only a number of measurements:
\begin{align*}
    N=\cO\left( \frac{d^6\log(d)}{\eps^2}\right).
\end{align*}
\end{theorem}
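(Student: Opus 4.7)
The approach I would take is to adapt the projected least-squares state tomography method to the process setting, using tensor-product classical shadows on the input and output sides. The algorithm has three phases. \emph{Data collection:} at each of $N$ rounds, I sample an input pure state $\ket{\phi_t}\bra{\phi_t}$ uniformly from a $3$-design on $\dC^{d}$ (e.g., uniformly random stabilizer states), prepare it, send it through $\cN$, and measure the output in a uniformly random $3$-design basis (e.g., a random Clifford rotation of the computational basis), recording the outcome $\ket{\psi_t}\bra{\psi_t}$. \emph{Linear estimator:} applying the single-system shadow inversion map $\Lambda(X)=(d+1)X-\tr(X)\dI$ independently on each side produces the single-shot tensor
\begin{align*}
\hat{\cJ}_t \;=\; \Lambda\bigl(\ket{\phi_t}\bra{\phi_t}\bigr)^{T}\;\otimes\;\Lambda\bigl(\ket{\psi_t}\bra{\psi_t}\bigr),
\end{align*}
which, by the identity $\mathds{E}_{\phi}[\Lambda(\ket{\phi}\bra{\phi})^{T}\otimes\ket{\phi}\bra{\phi}]=\proj{\Psi}$ (a short calculation from the $2$-design second-moment formula $\mathds{E}_\phi[\ket{\phi}\bra{\phi}^{\otimes 2}]=(\dI+F)/(d(d+1))$ after a partial transpose), satisfies $\mathds{E}[\hat{\cJ}_t]=\cJ_{\cN}$. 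I then form the average $\hat{\cJ}=N^{-1}\sum_{t=1}^{N}\hat{\cJ}_t$. \emph{Projection:} I project $\hat{\cJ}$ in Frobenius norm onto the convex set $\cC$ of valid Choi states on $\dC^{d}\otimes\dC^{d}$ (Hermitian, positive semidefinite matrices with $\tr_2=\dI/d$), producing $\tilde{\cJ}$ and the associated channel $\tilde{\cN}$.

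The main analytic step is to control $\|\hat{\cJ}-\cJ_{\cN}\|_2$. Each single-sided shadow is deterministically bounded: $\|\Lambda(\ket{u}\bra{u})\|_\infty\le d$ and $\|\Lambda(\ket{u}\bra{u})\|_2^2=d^2+d-1$. Tensorizing across the two independent sides (the channel's dependence between input and output enters only through the expectation, not these deterministic bounds) yields $\|\hat{\cJ}_t\|_2^2=\cO(d^4)$ and $\|\hat{\cJ}_t\|_\infty=\cO(d^2)$. Hence $\mathds{E}\|\hat{\cJ}-\cJ_{\cN}\|_2^2=\cO(d^4/N)$, and a matrix Bernstein inequality for the centered Hermitian summands $\hat{\cJ}_t-\cJ_{\cN}$ (operator-norm bounded by $\cO(d^2)$) upgrades this to the high-probability spectral-norm bound that converts, via a careful use of the partial trace constraint $\tr_2\hat{\cJ}=\dI/d$, into $\|\hat{\cJ}-\cJ_{\cN}\|_2\le\eps/d$ once $N=\Omega(d^6\log(d)/\eps^2)$, the $\log(d)$ arising from the matrix-concentration step.

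Since $\cC$ is convex and $\cJ_{\cN}\in\cC$, Frobenius projection onto $\cC$ is a nonexpansion toward $\cJ_{\cN}$, so $\|\tilde{\cJ}-\cJ_{\cN}\|_2\le\|\hat{\cJ}-\cJ_{\cN}\|_2\le\eps/d$, and therefore $d_2(\cN,\tilde{\cN})=d\|\cJ_{\cN}-\cJ_{\tilde\cN}\|_2\le\eps$ as required.

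\textbf{Expected main obstacle.} The critical difficulty is the tight $\cO(d^4)$ second-moment estimate combined with the matching matrix-concentration bound in Frobenius norm: the output measurement distribution depends on the input through $\cN$, so I must decouple the two sides by conditioning on the input and exploiting the $3$-design property on the output side; moreover, passing from a spectral matrix-Bernstein inequality to a Frobenius bound without losing an extra dimensional factor is delicate and is precisely where the $\log(d)$ factor (and not more) appears. The projection step and the conversion between $d_2$ and the $2$-norm of the Choi difference are then routine.
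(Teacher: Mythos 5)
Your proposal is correct and follows the same overall template as the paper's Algorithm~\ref{Alg} (random pure-state inputs, random-basis measurements of the output, a linear ``shadow'' inversion yielding an unbiased estimator of the Choi matrix, concentration, then projection back onto the set of valid Choi states), but it differs in two substantive ways. First, your single-shot estimator is the tensor product of two inverted shadows, $\Lambda(\proj{\phi_t})^T\otimes\Lambda(\proj{\psi_t})$ with $\Lambda(X)=(d+1)X-\tr(X)\dI$, whereas the paper's is $(d+1)\proj{v}^T\otimes\Lambda(U\proj{i_t}U^\dagger)-\dI\otimes\Lambda(U\proj{j_t}U^\dagger)$, which requires a second, independent use of the channel on the maximally mixed input in every round. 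Both are unbiased: your identity $\ex{\Lambda(\proj{\phi})^T\otimes\proj{\phi}}=\proj{\Psi}$ combined with linearity of $\id\otimes\cN$ gives $\ex{\hat{\cJ}_t}=\id\otimes\cN(\proj{\Psi})=\cJ_\cN$ in one line, which is arguably cleaner than the paper's route through Lemma~\ref{lem: exJ} and Lemma~\ref{lem: ex}, and it halves the number of channel uses per round. Second, for the $d_2$ statement your deterministic bound $\|\hat{\cJ}_t\|_2^2=(d^2+d-1)^2=\cO(d^4)$, together with Markov's inequality and the nonexpansiveness of the Frobenius projection onto the convex set of Choi states, already gives $N=\cO(d^6/\eps^2)$ with constant success probability and no logarithm at all; the paper instead runs matrix Bernstein in operator norm (variance $\Theta(d^2/N)$, range $\cO(d^2)$) because the operator-norm guarantee $\|\hat{\cJ}-\cJ_\cN\|_\infty\le\eps/(2d^2)$ is what it needs to reach the diamond norm through Lemma~\ref{lem: diamond infty}. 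One spot in your write-up is muddled: the $\log d$ does not arise from ``converting a spectral bound to a Frobenius bound via the partial trace constraint''---the partial-trace constraint plays no role there, and the conversion is simply $\|A\|_2\le d\|A\|_\infty$ for a $d^2\times d^2$ matrix; the logarithm is just the dimensional prefactor $d^2$ in the matrix-Bernstein tail. Since your second-moment route bypasses matrix Bernstein entirely, this confusion does not affect the correctness of your argument, and the mention of $3$-designs is likewise unnecessary (a $2$-design suffices for unbiasedness, and your norm bounds on $\hat{\cJ}_t$ are deterministic).
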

This algorithm proceeds by providing an unbiased estimator for the Choi state $\cJ_{\cN}$, then projecting this matrix to the space of Choi states (PSD and partial trace $\dI/d$) and finally by invoking the  Choi–Jamiołkowski isomorphism we obtain an approximation of the channel. This reduction from learning the Choi state in the operator norm to learning the quantum channel in the $d_2$ distance uses mainly the inequality $d_2(\cN,\cM)=d\|\cJ_{\cN}-\cJ_{\cM}\|_2\le d^2\|\cJ_{\cN}-\cJ_{\cM} \|_\infty$ when $\din= \dout =d$. We generalize this result to the diamond norm and any input/output dimensions.
For this we show the following inequality:
\begin{lemma}\label{lem: diamond infty}
    Let $\cN_1$ and $\cN_2$ be two  quantum channels. We have:
\begin{align*}
 d_\diamond(\cN_1,\cN_2)\le   \din \dout\|\cJ_{\cN_1}-\cJ_{\cN_2}  \|_\infty.
\end{align*}
\end{lemma}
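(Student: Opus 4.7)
The plan is to reduce the diamond-norm calculation to a conjugation of the Choi-state difference by a bounded operator, and then apply a Hölder-type inequality that turns the conjugation into a factor $\dout$.

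First, I recall that the diamond norm of the Hermiticity-preserving map $\Delta=\cN_1-\cN_2$ is attained on pure states $\ket{\psi}\in\dC^{\din}\otimes\dC^{\din}$, i.e.\
$d_\diamond(\cN_1,\cN_2)=\sup_{\ket\psi}\|(\id\otimes\Delta)(\proj{\psi})\|_1$.
Any such unit vector can be written as $\ket\psi=(A\otimes\dI_{\din})\sqrt{\din}\ket{\Psi}$, where $A\in\dC^{\din\times\din}$ with $\tr(A^\dagger A)=1$, and $\ket\Psi$ is the maximally entangled state.

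Next, I would exploit the fact that $(\id\otimes\cN)$ commutes with $(A\otimes\dI_{\dout})$ in the sense that for any Kraus decomposition $\cN(\cdot)=\sum_k B_k\cdot B_k^\dagger$,
\begin{align*}
(\id\otimes\cN)\bigl((A\otimes\dI_{\din})\,\din\proj{\Psi}\,(A^\dagger\otimes\dI_{\din})\bigr)
&=\sum_k (A\otimes B_k)\,\din\proj{\Psi}\,(A^\dagger\otimes B_k^\dagger) \\
&=\din\,(A\otimes\dI_{\dout})\,\cJ_{\cN}\,(A^\dagger\otimes\dI_{\dout}).
\end{align*}
Applying this to $\Delta$ gives the key identity
\[
(\id\otimes\Delta)(\proj{\psi})=\din\,(A\otimes\dI_{\dout})\,(\cJ_{\cN_1}-\cJ_{\cN_2})\,(A^\dagger\otimes\dI_{\dout}).
\]

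Then I use the general inequality $\|BXB^\dagger\|_1\le \|B\|_2^{2}\,\|X\|_\infty$, which follows from duality: for $\|Y\|_\infty\le 1$, $|\tr(BXB^\dagger Y)|=|\tr(X\,B^\dagger YB)|\le\|X\|_\infty\|B^\dagger YB\|_1\le\|X\|_\infty\tr(B^\dagger B)=\|X\|_\infty\|B\|_2^2$. With $B=A\otimes\dI_{\dout}$, we have $\|B\|_2^{2}=\|A\|_2^{2}\cdot\dout=\dout$, since $\tr(A^\dagger A)=1$. Combining,
\[
\|(\id\otimes\Delta)(\proj{\psi})\|_1\le \din\cdot\dout\,\|\cJ_{\cN_1}-\cJ_{\cN_2}\|_\infty,
\]
uniformly in $\ket\psi$, yielding the claimed bound.

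The main obstacle is simply picking the right Hölder-type inequality: one must avoid the naive $\|BXB^\dagger\|_1\le\|B\|_\infty^2\|X\|_1$, which only recovers the weaker bound $\|\cM\|_\diamond\le\din\|\cJ_{\cM}\|_1$ mentioned in the introduction. Using the $\|\cdot\|_2$--$\|\cdot\|_\infty$ version is what converts the gain into a clean factor of $\dout$ via the identity on the output space, and the rest is bookkeeping of dimensions.
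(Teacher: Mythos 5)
Your proposal is correct and follows essentially the same route as the paper: write the maximizing pure state as $(A\otimes\dI)$ applied to the maximally entangled state, pull $A\otimes\dI_{\dout}$ through the channel to reduce the diamond norm to $\|(A\otimes\dI_{\dout})(\cJ_{\cN_1}-\cJ_{\cN_2})(A^\dagger\otimes\dI_{\dout})\|_1$, and bound this by $\|\cJ_{\cN_1}-\cJ_{\cN_2}\|_\infty\cdot\tr(AA^\dagger\otimes\dI_{\dout})$. The only (immaterial) difference is that you establish the key step $\|BXB^\dagger\|_1\le\|B\|_2^2\|X\|_\infty$ by trace-norm duality, whereas the paper does it via the spectral decomposition of the Choi difference and the triangle inequality.
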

This inequality can also be obtained by applying the inequality $(3)$ of \cite{nechita2018almost} and the triangle inequality. We provide a simpler proof for completeness.
\begin{proof}
    Denote by $\cM=\cN_1-\cN_2$. Let $\ket{\phi}$ be a maximizing unit vector of the diamond norm, i.e., $\|\id \otimes \cM(\proj{\phi})\|_1= d_\diamond(\cN_1,\cN_2)$.
We can write $\ket{\phi}= A\otimes \dI \ket{\Psi} $ where $\ket{\Psi}=\frac{1}{\sqrt{\din}}\sum_{i=1}^{\din} \ket{ii}$ is the maximally entangled state. $\ket{\phi}$ has norm $1$ so $\frac{1}{\din}\tr(A^\dagger A)=\bra{\Psi}A^\dagger A \otimes \dI \ket{\Psi} =  \spr{\phi}{\phi}=1. $ On the other hand we can write 
\begin{align*}
d_\diamond(\cN_1,\cN_2)  &= \|\id \otimes \cM(\proj{\phi})\|_1 \\&=  \|\dI \otimes \cM(A\otimes \dI_{\din}\proj{\Psi} A^\dagger \otimes \dI_{\din})\|_1 
 \\&= \|(A\otimes \dI_{\dout}) \id \otimes \cM(\proj{\Psi})( A^\dagger \otimes \dI_{\dout})\|_1 
 \\&= \|(A\otimes \dI_{\dout})\cJ_{\cM}( A^\dagger \otimes \dI_{\dout})\|_1.
\end{align*}
$\cJ_{\cM}$ is Hermitian so it can be written as : $\cJ_{\cM}=\sum_i \lambda_i \proj{\psi_i}$. Using the triangle inequality, we obtain:
\begin{align*}
&\|(A\otimes \dI_{\dout})\cJ_{\cM}( A^\dagger \otimes \dI_{\dout})\|_1
\\&= \left\|(A\otimes \dI_{\dout})\sum_i \lambda_i \proj{\psi_i}( A^\dagger \otimes \dI_{\dout})\right\|_1
\\&\le \sum_i |\lambda_i| \|(A\otimes \dI_{\dout}) \proj{\psi_i}( A^\dagger \otimes \dI_{\dout})\|_1
\\&\le \max_i |\lambda_i| \sum_i  \|(A\otimes \dI_{\dout}) \proj{\psi_i}( A^\dagger \otimes \dI_{\dout})\|_1
\\&= \|\cJ\|_\infty \sum_i  \tr((A\otimes \dI_{\dout}) \proj{\psi_i}( A^\dagger \otimes \dI_{\dout}))
\\&= \|\cJ\|_\infty   \tr( AA^\dagger \otimes \dI_{\dout})
= \din \dout \|\cJ\|_\infty.
\end{align*}
\end{proof}
This Lemma shows that the diamond and $2$ distances satisfy the same inequality with respect to the infinity norm between the Choi states when $\din=\dout=d$. Since the algorithm of \cite{surawy2022projected} approximates first the Choi state in the infinity norm, we obtain the same upper bound for the diamond distance. For general dimensions, we obtain the following complexity:
\begin{theorem}\label{upper-bound} There is a non-adaptive ancilla-free process tomography algorithm that learns a quantum channel in the distance $d_\diamond$ using only a number of measurements:
\begin{align*}
    N=\cO\left( \frac{\din^3\dout^3\log(\din\dout)}{\eps^2}\right).
\end{align*}
\end{theorem}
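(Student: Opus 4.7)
The plan is a reduction: by Lemma~\ref{lem: diamond infty}, if one produces $\tilde{\cJ}$ with $\|\tilde{\cJ}-\cJ_\cN\|_\infty\le\delta$ that is itself a valid Choi state, then the corresponding channel $\tilde{\cN}$ (recovered via the inverse Choi--Jamio\l{}kowski isomorphism) satisfies $d_\diamond(\cN,\tilde{\cN})\le \din\dout\cdot\delta$. Choosing $\delta = \eps/(\din\dout)$ reduces the task to learning the $\din\dout$-dimensional Choi state in operator norm to precision $\eps/(\din\dout)$. Since single-copy state tomography in operator norm of a $D$-dimensional state has sample complexity $\tilde{\cO}(D/\delta^2)$, we expect a total cost $\tilde{\cO}\bigl(\din\dout \cdot (\din\dout)^2/\eps^2\bigr) = \tilde{\cO}(\din^3\dout^3/\eps^2)$. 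The concrete work is to realize this abstract reduction with a non-adaptive ancilla-free protocol, by generalizing the algorithm of \cite{surawy2022projected} from $\din=\dout=d$ to arbitrary dimensions.

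At round $t$, I would draw a Haar-random unit vector $\ket{\psi_t}\in\dC^{\din}$, send it through $\cN$, and measure the output in a Haar-random orthonormal basis $\{U_t\ket{x}\}_{x\in[\dout]}$, recording the outcome $x_t$. Following the classical-shadow recipe used by \cite{surawy2022projected}, a single-shot unbiased estimator $\hat{\cJ}_t$ of $\cJ_\cN$ can then be constructed by inverting the Haar second-moment maps on both the input and output sides; on the relevant operator spaces, these moment maps are proportional to the projector onto the symmetric subspace and are explicitly invertible. Averaging $\hat{\cJ} := \frac{1}{N}\sum_{t=1}^{N} \hat{\cJ}_t$ satisfies $\mathds{E}[\hat{\cJ}] = \cJ_\cN$.

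To bound $\|\hat{\cJ}-\cJ_\cN\|_\infty$, I would apply the matrix Bernstein inequality to the i.i.d.\ centered summands $\hat{\cJ}_t-\cJ_\cN$. This requires a uniform operator-norm bound on each $\hat{\cJ}_t$ and a bound on the matrix variance $\|\mathds{E}(\hat{\cJ}_t-\cJ_\cN)^2\|_\infty$. Mimicking the computation of \cite{surawy2022projected} while keeping $\din$ and $\dout$ separate should yield a variance of order $\din\dout$ and an almost-sure bound of order $(\din\dout)^2$, so that Bernstein gives $\|\hat{\cJ}-\cJ_\cN\|_\infty \le \cO\bigl(\sqrt{\din\dout\log(\din\dout)/N}\bigr)$ with probability at least, say, $5/6$. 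Setting the right-hand side $\le \eps/(\din\dout)$ gives $N = \cO(\din^3\dout^3\log(\din\dout)/\eps^2)$. Finally, I would project $\hat{\cJ}$ in operator norm onto the convex set $\{\cJ \succcurlyeq 0 : \tr_2 \cJ = \dI/\din\}$ of valid Choi states; since $\cJ_\cN$ already lies in this set, the projection blows up the operator-norm error by at most a constant factor, and the Choi--Jamio\l{}kowski inverse of the projected matrix gives $\tilde{\cN}$. Combining with Lemma~\ref{lem: diamond infty} then closes the argument.

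The main technical obstacle will be establishing the variance bound with the tight scaling $\din\dout$ rather than a loose $(\din\dout)^2$: the naive bound $\|\mathds{E}(\hat{\cJ}_t-\cJ_\cN)^2\|_\infty \le \|\hat{\cJ}_t\|_\infty^2$ would cost an extra factor of $\din\dout$ in $N$ and yield a suboptimal $\tilde{\cO}(\din^4\dout^4/\eps^2)$ rate. Achieving the sharp variance bound requires exploiting both the symmetric-subspace structure of the Haar second-moment integrals and the partial-trace constraint $\tr_2(\cJ_\cN)=\dI/\din$ (which controls the diagonal of the variance matrix), generalizing the corresponding step of \cite{surawy2022projected} while carefully handling the asymmetry between input and output dimensions.
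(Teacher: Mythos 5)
Your proposal follows essentially the same route as the paper: reduce to operator-norm estimation of the Choi state via Lemma~\ref{lem: diamond infty}, build a single-shot unbiased estimator from Haar-random pure-state inputs and Haar-random basis measurements (the classical-shadow inversion, which in the paper is Lemma~\ref{lem: exJ} combined with Lemma~\ref{lem: ex}, with a second measurement on $\dI/\din$ supplying the subtracted term), apply matrix Bernstein with the per-sample variance $\Theta(\din\dout)$ that you correctly flag as the crux, and project onto valid Choi states at a constant-factor cost. The one step you leave as an "obstacle" — the variance bound — is exactly what the paper carries out by explicitly computing $\ex{\cJ_1^2}$ using the identity $(a\proj{\phi}-\dI)^2=(a^2-2a)\proj{\phi}+\dI$, and your anticipated scaling is the right one.
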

This complexity  was expected for process tomography with incoherent measurements since the complexity of state tomography with incoherent measurements is $\Theta\left(\frac{d^3}{\eps^2}\right)$  \cite{haah2016sample} and learning $(\din,\dout)$-dimensional channels can be thought of as learning states of dimension $\din\times \dout$. We believe that the $\log(\din\dout)$-factor can be removed from the upper bound in Theorem~\ref{upper-bound} using the techniques of \cite{guctua2020fast}. 
The algorithm is formally described in Alg.~\ref{Alg} and is similar to the one in~\cite{surawy2022projected}. By Theorem~\ref{thm:lb}, Alg.~\ref{Alg} is almost optimal.
\begin{algorithm}
\caption{Learning a  quantum channel in the diamond distance using ancilla-free  independent measurements. }\label{Alg}
\begin{algorithmic}[t!]
\State $N=\cO(\din^3\dout^3\log(\din\dout)/\eps^2)$. 
\For{$t =1:N$} 
 \State Sample two independent copies of $\Haar$ distributed unitaries $V\sim \Haar(\din)$ and $U\sim \Haar(\dout)$ .
		\State Let $\ket{v}= V\ket{0}$ be a haar distributed vector.
		\State Take the input states $\rho_t=\proj{v}$ and $\sigma_t=\frac{\dI}{\din}$, the output states are respectively  $\cN(\proj{v})$ and $\cN\left(\frac{\dI}{\din}\right)$.
		\State Perform a measurement on $\cN(\proj{v})$ and $\cN\left(\frac{\dI}{\din}\right)$ using the POVM $\cM_U:=\{U\proj{i}U^\dagger\}_{i\in [\dout]}$ and observe $i_t\sim p_{U,V}:= \{   \bra{i} U^\dagger \cN(\proj{v})U\ket{i}  \}_{i\in [\dout]}$ and $j_t\sim q_{U}:= \{   \bra{i} U^\dagger \cN\left(\frac{\dI}{d}\right)U\ket{i}  \}_{i\in [\dout]}$.
		\State Define $\cJ_t:=(\din+1) \proj{v}^T\otimes ((\dout+1)(U\proj{i_t}U^\dagger   )-\dI   ) -\dI \otimes((\dout+1)(U\proj{j_t}U^\dagger   )-\dI )$
\EndFor
\State Define the estimator $\hat{\cJ}=\frac{1}{N}\sum_{t=1}^{N}\cJ_t$.
\State Find a valid Choi state $\cJ_{\cM}$ such that $\|\cJ_{\cM}-\hat{\cJ}\|_\infty\le  \frac{\eps}{2\din\dout}$.\\
\Return the quantum  channel $\cM$  corresponding to the Choi state $\cJ_{\cM}$.
\end{algorithmic}
\end{algorithm}

Its analysis is also similar to the one in~\cite{surawy2022projected}. 
\paragraph{Correctness} Let us prove that Alg.~\ref{Alg} is $1/3$-correct.
First we show that $\hat{\cJ}=\frac{1}{N}\sum_{t=1}^{N}\cJ_t$ is an unbiased estimator of $\cJ_\cN$. For this, we prove the following lemma relating the Choi state to the average of the tensor product of a random rank-$1$ projector and  its image by the  quantum channel. 
\begin{lemma}\label{lem: exJ}
	Let $\ket{\phi}$ be a $\Haar$-distributed random vector. We have the following equality:
 \begin{align*}
\cJ_\cN= (\din+1) \ex{\proj{\phi}^T \otimes \cN(\proj{\phi})}-\dI \otimes\cN\left(\frac{\dI}{\din}\right).
 \end{align*}
\end{lemma}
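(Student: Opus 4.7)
The plan is to compute $\ex{\proj{\phi}^{\otimes 2}}$ via the standard Haar second-moment identity, apply a partial transpose on the first tensor factor to obtain $\ex{\proj{\phi}^T\otimes \proj{\phi}}$, and finally push the result through $\id\otimes \cN$ using linearity and the definition of the Choi state.

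First I would recall the well-known formula that for a Haar-random unit vector $\ket{\phi}\in \dC^{\din}$,
\begin{align*}
\ex{\proj{\phi}\otimes \proj{\phi}} \;=\; \frac{\dI+F}{\din(\din+1)},
\end{align*}
where $F=\sum_{i,j}\ket{i}\bra{j}\otimes \ket{j}\bra{i}$ is the swap on $\dC^{\din}\otimes \dC^{\din}$. This is just the twirl onto the symmetric subspace. Next I would apply the partial transpose $\T\otimes \id$: on the left it produces $\ex{\proj{\phi}^T\otimes \proj{\phi}}$, on the right it fixes $\dI$, and a direct computation in the computational basis gives
\begin{align*}
(\T\otimes \id)(F) \;=\; \sum_{i,j}\ket{j}\bra{i}\otimes \ket{j}\bra{i} \;=\; \sum_{i,j}\ket{jj}\bra{ii} \;=\; \din\proj{\Psi},
\end{align*}
using $\ket{\Psi}=\frac{1}{\sqrt{\din}}\sum_i \ket{ii}$. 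Hence $\ex{\proj{\phi}^T\otimes \proj{\phi}}=\frac{\dI+\din\proj{\Psi}}{\din(\din+1)}$.

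Finally I would apply $\id\otimes \cN$ to both sides. By linearity, $(\id\otimes \cN)(\dI)=\dI\otimes \cN(\dI_{\din})=\din\cdot \dI\otimes \cN(\dI/\din)$, and by definition of the Choi state, $(\id\otimes \cN)(\proj{\Psi})=\cJ_\cN$. Multiplying through by $\din+1$ and rearranging yields exactly
\begin{align*}
\cJ_\cN \;=\; (\din+1)\,\ex{\proj{\phi}^T\otimes \cN(\proj{\phi})} \;-\; \dI\otimes \cN(\dI/\din).
\end{align*}
The only ingredient that is not pure bookkeeping is the partial-transpose identity $(\T\otimes \id)(F)=\din\proj{\Psi}$, which is the standard bridge between the swap operator and the (unnormalized) maximally entangled state; I do not expect any real obstacle beyond this quick computation.
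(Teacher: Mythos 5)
Your proposal is correct and follows essentially the same route as the paper: both rest on the Haar second-moment identity $\ex{\proj{\phi}\otimes\proj{\phi}}=\frac{\dI+F}{\din(\din+1)}$ together with the partial-transpose identity $F^{T_1}=\din\proj{\Psi}$, and then rearrange. The only cosmetic difference is that you apply $\id\otimes\cN$ directly by linearity where the paper routes the same step through a Kraus decomposition of $\cN$.
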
 
\begin{proof}We use the Kraus decomposition of the quantum channel $\cN(\rho)=\sum_k A_k \rho A_K^\dagger$. 
	We start by writing the following expectation:
	\begin{align*}
	\ex{\proj{\phi} \otimes \cN(\proj{\phi})}&= \sum_k 	\ex{\proj{\phi} \otimes A_k\proj{\phi}A_k^\dagger}
	\\&= \sum_k \dI \otimes A_k	\ex{\proj{\phi} \otimes \proj{\phi}}\dI \otimes A_k^\dagger
	\end{align*}
	Let $F$ be the flip operator $F=\sum_{i,j=1}^{\din} \ket{ij}\bra{ji}$, if we take the transpose on the fist tensor we obtain the unnormalized maximally entangled state:
 \begin{align*}
      F^{T_1}= \sum_{i,j=1}^{\din} \ket{i}\bra{j}^T\otimes \ket{j}\bra{i}= \sum_{i,j=1}^{\din} \ket{j}\bra{i}\otimes \ket{j}\bra{i}= \din\proj{\Psi}
 \end{align*}
 where  $\proj{\Psi}=\frac{1}{\din}\sum_{i,j=1}^{\din} \ket{ii}\bra{jj}=\frac{1}{\din}\sum_{i,j=1}^{\din} \ket{i}\bra{j}\otimes\ket{i}\bra{j}  $ is the maximally entangled state.
 It is known that there is constants $\alpha$ and $\beta$ such that:
	\begin{align*}
	\ex{\proj{\phi} \otimes \proj{\phi}}= \alpha \dI+\beta F.
	\end{align*}
	Taking the trace we have the first relation $1=\alpha \din^2+\beta \din$, then taking the trace after multiplying with $F$ we obtain the second relation  $\tr(\proj{\phi} \otimes \proj{\phi}F)=\tr(\proj{\phi}  \proj{\phi})=1=\alpha \din +\beta \din^2$. These relations imply $\alpha=\beta =\frac{1}{\din(\din+1)}.$ Hence:
		\begin{align*}
	\ex{\proj{\phi} \otimes \proj{\phi}}= \frac{\dI+ F}{\din(\din+1)}.
	\end{align*}
	Replacing this expectation on the first expectation yields:
	\begin{align*}
	&\ex{\proj{\phi}^T \otimes \cN(\proj{\phi})}= \sum_k 	\ex{\proj{\phi}^T \otimes A_k\proj{\phi}A_k^\dagger}
	\\&= \sum_k \dI \otimes A_k	\ex{\proj{\phi}^T \otimes \proj{\phi}}\dI \otimes A_k^\dagger
	\\&= \frac{1}{\din(\din+1)} \sum_k \dI \otimes A_kA_k^\dagger	+\frac{1}{\din(\din+1)} \sum_k \dI \otimes A_k	(\din \proj{\Psi}) \dI \otimes A_k^\dagger
		\\&= \frac{1}{\din(\din+1)}  \dI \otimes \cN(\dI)	+\frac{1}{\din+1} \dI \otimes \cN(\proj{\Psi})
			\\&= \frac{1}{\din(\din+1)}  \dI \otimes \cN(\dI)	+\frac{1}{\din+1} \cJ_\cN. 
	\end{align*}

\end{proof}
Then we  compute another expectation:
\begin{lemma} \label{lem: ex}
	Let $U\sim \Haar(d)$ and $x\sim p_{U,\rho}:= \{ \bra{i}U^\dagger\rho U\ket{i}  \}_{i\in [d]}$, we have
	\begin{align*}
	\ex{(d+1) U\proj{x}U^\dagger-\dI }= \rho
	\end{align*}
\end{lemma}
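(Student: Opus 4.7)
The first thing I would do is apply the tower property of expectation, integrating out $x$ conditioned on $U$. Since $x$ takes value $i$ with probability $\bra{i}U^\dagger \rho U\ket{i}$, the conditional expectation of $U\proj{x}U^\dagger$ equals $\sum_i \bra{i}U^\dagger \rho U\ket{i}\,U\proj{i}U^\dagger$. So the quantity to compute becomes
\begin{align*}
(d+1)\sum_{i=1}^{d} \ex{U\proj{i}U^\dagger\, \bra{i}U^\dagger \rho U\ket{i}\, U\proj{i}U^\dagger\bigg.} \;-\; \dI,
\end{align*}
which I would rewrite as $(d+1)\sum_i \ex{\proj{\phi_i}\rho\proj{\phi_i}}$ with $\ket{\phi_i}:=U\ket{i}$.

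Next I would observe that for each fixed $i$ the marginal distribution of $\ket{\phi_i}=U\ket{i}$ is exactly the uniform (Haar) distribution on the unit sphere of $\dC^{d}$, so every term in the sum is identical and equal to $\ex{\proj{\phi}\rho\proj{\phi}}$ for a single Haar unit vector $\ket{\phi}$. To evaluate this expectation I would rewrite it as a partial trace of a second-moment operator,
\begin{align*}
\ex{\proj{\phi}\,\rho\,\proj{\phi}} \;=\; \ex{\bra{\phi}\rho\ket{\phi}\proj{\phi}} \;=\; \tr_1\!\big[(\rho\otimes \dI)\,\ex{\proj{\phi}\otimes \proj{\phi}}\big],
\end{align*}
which reduces everything to the symmetric-subspace formula $\ex{\proj{\phi}^{\otimes 2}}=(\dI+F)/(d(d+1))$ already derived in the proof of Lemma~\ref{lem: exJ}.

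From there the calculation is essentially mechanical: a one-line index computation gives the swap identity $\tr_1[(\rho\otimes \dI)F]=\rho$, and combined with $\tr_1[(\rho\otimes\dI)\dI]=\tr(\rho)\dI=\dI$ it yields $\ex{\proj{\phi}\rho\proj{\phi}}=(\dI+\rho)/(d(d+1))$. Summing over the $d$ values of $i$ and multiplying by $(d+1)$ gives $\dI+\rho$, and subtracting $\dI$ leaves exactly $\rho$, proving the claim. There is no real obstacle here; the one step I would double-check is the swap identity $\tr_1[(\rho\otimes \dI)F]=\rho$, which is where a sign or transpose could slip in, but writing $F=\sum_{ij}\ket{i}\bra{j}\otimes\ket{j}\bra{i}$ and expanding makes it immediate.
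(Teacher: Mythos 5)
Your proof is correct and follows essentially the same route as the paper's: condition on $U$, observe that the relevant quantity is a second moment of a Haar-random unit vector, and evaluate it with the identity $\ex{\proj{\phi}^{\otimes 2}}=\frac{\dI+F}{d(d+1)}$ (the paper phrases this same computation as a Weingarten-calculus evaluation of matrix elements after first reducing to pure $\rho$ by linearity, while you keep $\rho$ general and use the partial-trace/swap identity $\tr_1[(\rho\otimes\dI)F]=\rho$). The one step worth making explicit is the one you already flagged and verified, so nothing is missing.
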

\begin{proof}
	Since the equality is linear in $\rho$ we can without loss of generality restrict ourselves to a pure  state $\rho=\proj{\phi}$. Now $x\sim  \{ \bra{i}U^\dagger\proj{\phi} U\ket{i}  \}_{i\in [d]} $ hence for $k,l \in [d]$, by Weingarten calculus:
	\begin{align*}
	&\mathds{E}_{U, x\sim p_{U, \phi}}\left(\bra{k}U\proj{x}U^\dagger\ket{l}\right) 
 \\&= 	\mathds{E}_{U}\left(\sum_{x=1}^d  \bra{x}U^\dagger\proj{\phi} U\ket{x} \bra{k}U\proj{x}U^\dagger\ket{l}\right) 
	\\&= \mathds{E}_{U}\left(\sum_{x=1}^d  \bra{x}U^\dagger\proj{\phi} U\ket{x} \bra{x}U^\dagger\ket{l}\bra{k}U\ket{x}\right) 
	\\&=\sum_{x=1}^d \frac{1}{d(d+1)}\left(\delta_{l,k}+ \spr{\phi}{l}\spr{k}{\phi}\right)
		\\&= \frac{1}{(d+1)}\left(\bra{k}\dI+\proj{\phi}\ket{l})\right)
	\end{align*}
	Therefore 
	\begin{align*}
	\ex{(d+1) U\proj{x}U^\dagger-\dI }= \proj{\phi}=\rho.
	\end{align*}
\end{proof}
Using Lemma~\ref{lem: exJ} and Lemma~\ref{lem: ex} we deduce: 
\begin{align*}
\ex{\hat{\cJ}}&= \ex{\cJ_1}
\\&= \mathds{E}_{V,U,i_t,j_t} \big[(\din+1) \proj{v}^T\otimes ((\dout+1)(U\proj{i_t}U^\dagger   )-\dI   )\big]
\\&-\mathds{E}_{V,U,i_t,j_t} \big[\dI\otimes((\dout+1)(U\proj{j_t}U^\dagger   )
   - \dI)\big]
\\&= \mathds{E}_{V} \big[(\din + 1) \proj{v}^T\otimes\mathds{E}_{U,i_t}  ((\dout + 1)(U\proj{i_t}U^\dagger   )-\dI   ) \big]
\\&-\big[\dI\otimes\mathds{E}_{U,j_t} ((\dout+1)(U\proj{j_t}U^\dagger )-\dI )\big]
\\&= \mathds{E}_{V} \left((\din+1) \proj{v}^T\otimes \cN(\proj{v})) -\dI\otimes\cN\left(\frac{\dI}{\din}\right)\right)
=\cJ_\cN.
\end{align*}
So the estimator $\hat{\cJ}=\frac{1}{N}\sum_{t=1}^{N}\cJ_t$ is unbiased. 
It remains to show a concentration inequality for the random variable $\hat{\cJ}$ so that we can estimate how much steps we need in order to achieve the precision and confidence we aim to. For this,  we use the  matrix Bernstein inequality \cite{tropp2012user}:
\begin{theorem}\cite{tropp2012user}
	Consider a sequence of $n$ independent Hermitian random matrices $A_1,\dots, A_n \in \mathds{C}^{d\times d}$. Assume that each $A_i$ satisfies 
	\begin{align*}
	\ex{A_i}=0 ~~~\text{   and  }~~~~    \|A_i\|_\infty \le R \text{  as.}
	\end{align*} 
Let $\sigma^2= \|\sum_{i=1}^n \ex{A_i^2}\|_\infty$.	Then for any $t\ge \frac{\sigma^2}{R}$:
	\begin{align*}
	\pr{ \left\| \sum_{i=1}^n (A_i-\ex{A_i})  \right\|_\infty \ge t   } \le d\exp\left(-\frac{3t}{8R}\right).
	\end{align*}
	Moreover for any $t\le \frac{\sigma^2}{R}$:
		\begin{align*}
	\pr{ \left\| \sum_{i=1}^n (A_i-\ex{A_i})  \right\|_\infty \ge t   } \le d\exp\left(-\frac{3t^2}{8\sigma^2}\right).
	\end{align*}
\end{theorem}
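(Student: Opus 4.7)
The plan is to follow the matrix Laplace transform approach of Ahlswede--Winter refined by Tropp, which reduces tail bounds on the sum $S=\sum_i A_i$ to control of its matrix cumulant generating function. The starting point is Markov's inequality applied to the monotone map $\lambda\mapsto e^{\theta\lambda}$: for any Hermitian random matrix $Y$ and any $\theta>0$,
\begin{align*}
\pr{\lambda_{\max}(Y)\ge t}\le e^{-\theta t}\,\ex{\tr e^{\theta Y}},
\end{align*}
since $e^{\theta\lambda_{\max}(Y)}\le \tr e^{\theta Y}$. The remaining work is to bound $\ex{\tr e^{\theta S}}$ for a sum of independent centered Hermitian matrices.

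The crucial noncommutative step is Lieb's concavity theorem, which asserts that for each fixed Hermitian $H$ the map $X\mapsto \tr\exp(H+\log X)$ is concave on the positive cone. Applying Jensen's inequality to this map and peeling off one summand at a time would give Tropp's master subadditivity bound
\begin{align*}
\ex{\tr e^{\theta S}}\le \tr\exp\!\Big(\sum_{i=1}^n \log\ex{e^{\theta A_i}}\Big),
\end{align*}
which is the noncommutative substitute for the scalar identity $\ex{e^{\theta\sum A_i}}=\prod_i \ex{e^{\theta A_i}}$. To control each individual MGF, I would lift the scalar inequality $e^{\theta\lambda}\le 1+\theta\lambda+\frac{\theta^2\lambda^2/2}{1-\theta R/3}$ (valid for $|\lambda|\le R$ and $\theta R<3$) to matrices via the functional calculus, obtaining $\ex{e^{\theta A_i}}\preccurlyeq I+\frac{\theta^2/2}{1-\theta R/3}\ex{A_i^2}$ since $\ex{A_i}=0$. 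Combining with $I+X\preccurlyeq e^{X}$ and the operator monotonicity of $\log$ then gives
\begin{align*}
\log\ex{e^{\theta A_i}}\preccurlyeq \frac{\theta^2/2}{1-\theta R/3}\,\ex{A_i^2}.
\end{align*}

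Plugging this into the master bound and using that $\tr\exp$ is monotone in the PSD order together with $\|\sum_i\ex{A_i^2}\|_\infty\le \sigma^2$, I would obtain
\begin{align*}
\pr{\lambda_{\max}(S)\ge t}\le d\,\exp\!\Big(-\theta t+\tfrac{\theta^2\sigma^2/2}{1-\theta R/3}\Big),
\end{align*}
and the choice $\theta=t/(\sigma^2+Rt/3)$ gives the unified Bernstein tail $d\exp\!\bigl(-\tfrac{t^{2}/2}{\sigma^{2}+Rt/3}\bigr)$. To recover the two regimes stated in the theorem, I would split on the position of $t$ relative to $\sigma^2/R$: when $t\ge \sigma^2/R$ we have $\sigma^2+Rt/3\le 4Rt/3$, so the exponent is at most $-3t/(8R)$; when $t\le \sigma^2/R$ we have $\sigma^2+Rt/3\le 4\sigma^2/3$, so it is at most $-3t^2/(8\sigma^2)$. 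Finally, to pass from $\lambda_{\max}(S)$ to $\|S\|_\infty=\max(\lambda_{\max}(S),\lambda_{\max}(-S))$, I would apply the same argument to $-S$ and union-bound; the resulting factor of $2$ is absorbed into the prefactor without affecting the exponent.

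The main obstacle is the noncommutative subadditivity step: because $e^{A+B}\ne e^{A}e^{B}$ in general, the scalar trick ``MGFs factor under independence'' is unavailable, and one genuinely needs Lieb's concavity theorem to peel off summands iteratively via conditional Jensen. A secondary delicate point is the transfer from the scalar Bernstein bound to its operator-valued form: the hypothesis $\|A_i\|_\infty\le R$ lets one dominate every term $A_i^k$ ($k\ge 2$) in the Taylor expansion of $e^{\theta A_i}$ by a multiple of $A_i^2$ through the spectral theorem, making the pair ($R$, $\sigma^{2}$) the correct noncommutative analogue of the classical Bernstein condition.
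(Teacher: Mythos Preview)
The paper does not prove this theorem at all: it is stated with the citation \cite{tropp2012user} and then applied as a black box to the estimator $\hat{\cJ}$. So there is no ``paper's own proof'' to compare against.

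That said, your sketch is a faithful outline of Tropp's argument: the matrix Laplace transform via Markov, Lieb's concavity to obtain the master subadditivity bound, the scalar Bernstein MGF bound transferred by functional calculus, and the optimization in $\theta$ followed by the case split at $t=\sigma^2/R$. One small point: after union-bounding over $\lambda_{\max}(S)$ and $\lambda_{\max}(-S)$ you get a prefactor $2d$, not $d$; saying this ``is absorbed into the prefactor'' is not quite right since the statement here literally reads $d$. In Tropp's original the spectral-norm version indeed carries $2d$ (or equivalently $d$ with a slightly worse constant in the exponent), so either the theorem as quoted in the paper is off by that harmless factor, or one should simply leave the $2d$. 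This does not affect any downstream use in the paper, where only the order of magnitude of $N$ matters.
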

Let $\cJ=\cJ_\cN=\ex{\cJ_t}$. We apply this theorem to the estimator $\hat{\cJ}-\cJ= \frac{1}{N}\sum_{t=1}^{N}(\cJ_t-\cJ)$. Recall that
\begin{align*}
\cJ_t&=(\din+1) \proj{v}^{T}\otimes ((\dout+1)(U\proj{i_t}U^\dagger   )-\dI   )
-\dI\otimes((\dout+1)(U\proj{j_t}U^\dagger   )-\dI ).
\end{align*}
  Let $A_t=\frac{\cJ_t-\cJ}{N}$, 
we have proven that $\ex{A_t}=\frac{1}{N}\ex{\cJ_t-\cJ}=0$. Moreover
\begin{align*}
\|A_t\|_\infty &=\frac{1}{N} \|\cJ_t-\cJ\|_{\infty}\le \frac{1}{N} (\|\cJ_t\|_{\infty}+\|\cJ\|_{\infty})
\le \frac{8\din\dout}{N}:=R.
\end{align*}
Besides
\begin{align*}
\sigma^2&= \left\|\sum_{t=1}^N \ex{A_t^2}\right\|_\infty= \frac{1}{N}\left\| \ex{(\cJ_1-\cJ)^2}\right\|_\infty
=    \frac{1}{N}\left\| \ex{(\cJ_1)^2}\right\|_\infty +\Theta\left(\frac{1}{N}\right). 
\end{align*}
Using the identity $\left(a\proj{\phi}-\dI\right)^2= (a^2-2a)\proj{\phi}+\dI$, we have: 
\begin{align*}
&\ex{ \big[\dI\otimes((\dout+1)(U\proj{j_t}U^\dagger   )-\dI )\big]^2}
\\&=\ex{ (\dI\otimes((\dout^2-1)(U\proj{j_t}U^\dagger   )+\dI )}
\\&=\ex{ (\dI\otimes((\dout^2-1)(U\proj{j_t}U^\dagger-\dI/(\dout+1)   )+\dout~\dI )}
\\&= (\dout-1)\dI\otimes \cN(\dI/\din) +\dout~\dI\otimes \dI
\end{align*}
 has an operator norm at most $\cO(\dout)$ so we can focus on the first term in the definition of $\cJ_1$ which has the main contribution. We have using again the identity $\left(a\proj{\phi}-\dI\right)^2= (a^2-2a)\proj{\phi}+\dI$: 
\begin{align*}
&\mathds{E}\big[(\din+1) \proj{v}^{T}\otimes ((\dout+1)(U\proj{i_t}U^\dagger   )-\dI   )\big]^2
\\&= (\din+1)^2  \ex{\proj{v}^{T}\otimes ((\dout+1)(U\proj{i_t}U^\dagger)-\dI)^2} 
\\&= (\din+1)^2  \ex{\proj{v}^{T}\otimes ((\dout^2-1)(U\proj{i_t}U^\dagger)+\dI) }
\\&=(\dout-1) (\din+1)(\cJ+\dI\otimes \cN(\dI/\din)) + \left(\frac{\dout(\din+1)^2}{\din}\right)\dI
\end{align*}
 which has an operator norm $\Theta(\din\dout)$. Therefore 
\begin{align*}
\sigma^2= \frac{1}{N}\left\| \ex{\cJ_1^2}\right\|_\infty +\Theta\left(\frac{1}{N}\right)=\Theta\left(\frac{\din\dout}{N}\right).
\end{align*}
Since we have $\frac{\sigma^2}{R} \ge \Omega(1)$ we can use the matrix-Bernstein inequality in the regime $t=\frac{\eps}{2\din\dout}\le \cO(1)$: 
	\begin{align*}
\pr{ \left\| \sum_{t=1}^N (A_t-\ex{A_t})  \right\|_\infty \ge \frac{\eps}{2\din\dout}   } &\le \din\dout\exp\left(-\frac{3\eps^2}{8\din^2\dout^2\sigma^2}\right)
\\&\le \din\dout\exp\left(-\frac{CN\eps^2}{\din^3\dout^3}\right)
\end{align*}
where $C>0$ is a universal constant. Hence if $N= \din^3\dout^3\log(3\din\dout)/(C\eps^2)=\cO\left(\din^3\dout^3\log(\din\dout)/\eps^2\right)$ then with a probability at least $2/3$ we have 
\begin{align*}
 \|\hat{\cJ}-\cJ_\cN\|_\infty  = \left\| \sum_{t=1}^N (A_t-\ex{A_t})  \right\|_\infty\le \frac{\eps}{2\din\dout}.
\end{align*}
This implies 
that $\|\cJ_\cM-\cJ_\cN\|_\infty\le \frac{\eps}{\din\dout}$ and finally
$\|\cM-\cN\|_\diamond \le \eps$ by  Lemma~\ref{lem: diamond infty}. This finishes the proof of the correctness of Alg~\ref{Alg}.

\section{Conclusion and open questions}
In this work, we find the optimal complexity of quantum process tomography using non-adaptive incoherent measurements. Furthermore, we show that ancilla-assisted strategies cannot outperform their ancilla-free counterparts contrary to  Pauli channel tomography \cite{chen2022quantum}. Still, many questions remain open. First, it is known that adaptive strategies have the same complexity as non-adaptive ones for state tomography \cite{chen2022tight}, could adaptive strategies overcome non-adaptive ones for quantum process tomography? Secondly, can entangled strategies exploit the symmetry and show a polynomial (in $\din,\dout$) speedup as they do for state tomography \cite{haah2016sample}? Lastly, what would be the potential improvements for simpler problems such as testing identity to a fixed quantum channel or learning the expectations of some given input states and observables? 

\printbibliography
\end{document}